\numberwithin{equation}{section}
\newtheorem{theorem}{Theorem}[section]
\newtheorem{corollary}{Corollary}[section]
\DeclareMathOperator*{\minimise}{minimise}
\begin{document}

\title{Mathematical analysis of a \textit{Wolbachia} invasive model with imperfect maternal transmission and loss of \textit{Wolbachia} infection}

\author{ Adeshina I. Adekunle $\footnote{corresponding Author, Email: adeshina.adekunle@jcu.edu.au}$, Michael T. Meehan, and Emma S. McBryde\\
{\it {\footnotesize Australian Institute of Tropical Health and Medicine, James Cook University, Australia}}
}

\maketitle

%\begin{linenumbers}

\begin{abstract}
%% Text of abstract
Arboviral infections, especially dengue, continue to cause significant health burden in their endemic regions. One of the strategies to tackle these infections is to replace the main vector agent, \textit{Ae. aegypti}, with the ones incapable of transmiting the virus. \textit{Wolbachia}, an intracellular bacterium, has shown promise in achieving this goal. However, key factors such as imperfect maternal transmission, loss of \textit{Wolbachia} infection, reduced reproductive capacity and shortened life-span affect the dynamics of \textit{Wolbachia} in different forms in the \textit{Ae. aegypti} population.

In this study, we developed a \textit{Wolbachia} transmission dynamic model adjusting for imperfect maternal transmission and loss of \textit{Wolbachia} infection. The invasive reproductive number that determines the likelihood of replacement of the \textit{Wolbachia}-uninfected (WU) population is derived and with it, we established the local and global stability of the equilibrium points. This analysis clearly shows that cytoplasmic incompatibility (CI) does not guarantee establishment of the \textit{Wolbachia}-infected (WI) mosquitoes as imperfect maternal transmission and loss of \textit{Wolbachia} infection could outweigh the gains from CI. Optimal release programs depending on the level of imperfect maternal transmission and loss of \textit{Wolbachia} infection are shown. Hence, it is left to decision makers to either aim for replacement or co-existence of both populations.
\end{abstract}

\numberwithin{equation}{section}

%%
%% Start line numbering here if you want
%%

%% main text
\section{Introduction}
\label{S:1}

Vector borne diseases such as dengue, zika, chikungunya and yellow fever are of global health concern. For instance, dengue has a widespread geographical distribution with around 3.9 billion people at risk and an annual estimate of 390 million new dengue infections \cite{bhatt2013global, kyle2008global}. The major vector responsible for the transmission of dengue and other arboviral infections is the female \textit{Ae. aegypti} mosquito. Although, the female \textit{Ae. albopictus} mosquitoes also contributes. The risk of mortality due to dengue infection is low but is modified by the serotype of the infecting dengue and an individual's infection history (particularly their immune response to different serotypes) \cite{kyle2008global}. Despite the low risk of mortality, the large number of confirmed dengue cases and associated morbidity make dengue a substantial contributor to the global health burden.  The World Health Organization (WHO) global target for dengue by 2020 is to reduce morbidity and mortality by at least 25\% and 50\% respectively \cite{world2012global}. Integrated vector management is one among many potential control strategies being considered. Controlling the mosquito vectors appears to be promising but it comes with challenges and great cost \cite{manrique2015use, ooi2006dengue}.

Rather than preventing human-vector contacts, replacing the population of \textit{Ae. aegypti} mosquitoes with another variant that is incapable of viral transmission has been successfully applied to reduce dengue infections \cite{hoffmann2011successful} and the approach appears promising for other mosquito-borne infections such as chikungunya, malaria, West-Nile virus, and zika virus \cite{glaser2010native, dutra2016wolbachia, gomes2018infection, moreira2009wolbachia}. \textit{Wolbachia}, an intracellular insect bacterium, has the capacity to inhibit dengue virus proliferation inside the \textit{Ae. aegypti} mosquitoes and can spread via maternal (vertical) transmission \cite{moreira2009wolbachia, turley2009wolbachia}. Depending on the strain of \textit{Wolbachia},  mosquitoes infected with  \textit{Wolbachia} have a reproductive advantage over those uninfected via cytoplasmic incompatibility (CI) $-$ the mechanism that prevents the embryo maturing following mating between \textit{Wolbachia} infected (WI) males and \textit{Wolbachia} unfected (WU) females \cite{turelli1995cytoplasmic, ant2018wolbachia}. This advantage alone may not guarantee that WI mosquitoes will replace the \textit{Ae. aegypti} population, as \textit{Wolbachia} infection leads to a fitness cost to its host \cite{walker2011wmel, fine1978dynamics} and also, there are  reports regarding \textit{Wolbachia} infection and the loss of cytoplasmic incompatibility \cite{turelli1995cytoplasmic, fine1978dynamics, ross2019loss} as a result unfavorable conditions that lead to loss of \textit{Wolbachia} infection  in infected adults. Another factor that could prevent \textit{Wolbachia} infected mosquitoes from dominating the \textit{Ae. aegypti} population is imperfect maternal transmission \cite{turelli1995cytoplasmic, fine1978dynamics, turelli2010cytoplasmic, yeap2011dynamics}. Hence, having a full understanding of the interplay between key parameters in \textit{Wolbachia} introduction is necessary to ensure the success of the strategy if it is to be used on a large scale.

Mathematical modelling plays a significant role in understanding the impact of variables involved in the dynamics of a particular infectious disease and has been used in the decision-making process that guides the application of some typical control strategies \cite{martcheva2015introduction}. Different mathematical models have been developed to simulate the introduction of \textit{Wolbachia} into \textit{Ae. aegypti} populations \cite{ndii2012modelling, xue2017two, crain2011wolbachia, schraiber2012constraints, zheng2018wolbachia, campo2018optimal, Rafikov2019}, with each specifying conditions that enable WI mosquitoes to dominate. Caspari and Watson \cite{caspari1959evolutionary} demonstrated the importance of cytoplasmic incompatibility on the population replacement between WU and WI mosquitoes. Ndii et al. developed a deterministic compartmental model for the competition between the two mosquitoes populations and derived the steady-state solutions showing key parameters that could influence the competition between the two populations \cite{ndii2012modelling}. Xue et al. adopted similar appproach as \cite{ndii2012modelling} by incorporating sex structure into the compartmental models and showed that the endemic \textit{Wolbachia} steady-state solution can be established by releasing a sufficiently large number of \textit{Wolbachia} infected mosquitoes \cite{xue2017two}.  Using ordinary differential equations to model the competitions between WU and WI mosquitoes, Zhang et al. showed that the successful replacement of WU mosquitoes with WI ones would depend on the strains of \textit{Wolbachia} used and require a careful release design \cite{zheng2018wolbachia}. The idea of designed release methods was further emphasised by Qu et al. when they extend the model due to \cite{xue2017two} to include the fact that most female mosquitoes mate once \cite{Qu2018}. The model of Li and Liu places emphasis on the combinations of birth and death rate functions, \textit{Wolbachia} strain and the number of WI mosquitoes released \cite{Li2017}. All these modelling works pointed to the possibility of WI mosquitoes replacing the uninfected ones.

In this paper, we consider the impacts of imperfect maternal transmission and loss of \textit{Wolbachia} infection by investigating the asymptotic dynamics of the \textit{Wolbachia} invasive model and determining the necessary and sufficient conditions for \textit{Wolbachia} invasion. These two factors (imperfect maternal transmission and loss of  \textit{Wolbachia} infection) have not been considered by previous models and the derivation of both asymptotic and global stability of the possible equilibrium points are the novel results of this work. With the derivation of the global stability, appropiate control stategies can be adopted to ensure that WI mosquitoes can replace uninfected ones or at least become more abundant than the uninfected ones. We consider these strategies via optimal control.
\section{Model formulation}
 We consider the \textit{Ae. aegypti} mosquito population that is responsible for the transmission of most arboviral infections, in particular dengue virus. Similar to \cite{xue2017two}, the \textit{Ae. aegypti} mosquito population is divided into two major subpopulations: those with \textit{Wolbachia} infection ($w$); and those without \textit{Wolbachia} infection ($\bar{w}$). We denote the number of mosquitoes that are wildtype (i.e WU) in the aquatic stage (egg, larvae, and Pupae), adult male and adult female stage as $Q_{\bar{w}},M_{\bar{w}}$ and$\ F_{\bar{w}}$, respectively, and those with \textit{Wolbachia} infection as $Q_w,M_w$ and${\ F}_w$. If we assume a logistic growth in the aquatic stage (egg, larvae and Pupae) \cite{ndii2012modelling, xue2017two}, the dynamics of the \textit{Ae. aegypti} mosquito population are modelled as:

\begin{equation} \label{Eq1} 
\underbrace{\frac{\ dQ_{\bar{w}}}{dt\ }=\left[\frac{{\phi }_{\bar{w}}F_{\bar{w}}M_{\bar{w}}+\rho_{_1}{\phi }_wF_wM_w+\rho_{_2}{\phi}_wF_wM_{\bar{w}}}{M_{\bar{w}}{+M}_w}\right]
\left(1-\frac{Q}{K}\right)_{+}-({\mu }_a+\psi {)Q}_{\bar{w}}}_{\mathrm{Wildtype\ aquatic\ stage}},        
\end{equation} 
\begin{equation} \label{Eq2} 
\underbrace{\frac{dQ_w}{dt}=\left[\frac{{(1-\rho_{_1})\phi }_wF_wM_w+(1-\rho_{_2}){\phi }_wF_wM_{\bar{w}}}{M_{\bar{w}}{+M}_w}\right]\left(1-\frac{Q}{K}\right)_{+}-{(\mu }_a+\psi \mathrm{\ )}Q_w}_{\mathit{Wolbachia}\mathrm{\ aquatic\ stage}}, 
\end{equation} 
\begin{equation} \label{Eq3} 
\underbrace{\frac{dF_{\bar{w}}}{dt}=b\psi Q_{\bar{w}}+\sigma F_w-{\mu }_{\bar{w}}F_{\bar{w}}}_{WU\ female\ adult}, 
\end{equation} 
\begin{equation} \label{Eq4} 
\underbrace{\frac{dM_{\bar{w}}}{dt}=\left(1-b\right)\psi Q_{\bar{w}}+\sigma M_w-{\mu }_{\bar{w}}M_{\bar{w}}}_{WU\ male\ adult},             
\end{equation} 
\begin{equation} \label{Eq5} 
\underbrace{\frac{dF_w}{dt}=b\psi Q_w-\sigma F_w-{\mu }_w{\ F}_{w\ },}_{WI\ female\ adult} 
\end{equation} 
\begin{equation} \label{Eq6} 
\underbrace{\frac{dM_w}{dt}=\left(1-b\right)\psi Q_w-\sigma M_w-{\mu }_wM_w,}_{WI\ male\ adult}
\end{equation}
where  $Q=Q_{\bar{w}}+Q_w$ is the total number of aquatic stage mosquitoes which we assumed is less than the carrying capacity $(K)$. That is $\left(1-\frac{Q}{K}\right)_{+} = \max\left(0,\left(1-\frac{Q}{K}\right)\right)$. Unlike the \textit{Wolbachia} invasion model in \cite{xue2017two}, we assumed that the proportions of offspring due to imperfect maternal transmission are different between adult \textit{Ae . aegypti} males and females of different \textit{Wolbachia} infection status \cite{turelli1995cytoplasmic, Bian2013}.  We further include the possibility of a decline in the level of \textit{Wolbachia} infection  by allowing some WI  to become WU at a constant per capital rate $\sigma$.

The system of differential equations (\ref{Eq1} - \ref{Eq6}) is very complex. Hence, we reduce it to a simpler model that preserves the key dynamic features. This will enable us to study the dynamics of the \textit{Wolbachia} replacement strategy. One way to do this is to assume an equal number of male and female mosquitoes. This is reasonable considering the experimental work in \cite{arrivillaga2004} which estimated the ratio of male to female mosquitoes as $1.02 : 1$. Hence, by setting $M=F$ (for both $w$- and $\bar{w}$-type mosquitoes) and $b=\frac{1}{2}$, the system of differential equations  (\ref{Eq1} - \ref{Eq6}) can be reduced to:
\begin{align} \label{Eq7} 
\frac{\ dQ_{\bar{w}}}{dt\ }=&\left[\frac{{\phi }_{\bar{w}}F^2_{\bar{w}}+{{\rho_{_1}\phi }_wF}^2_w+\rho_{_2}{\phi }_wF_wF_{\bar{w}}}{F_{\bar{w}}+F_w}\right]\left(1-\frac{Q}{K}\right)-{(\mu }_a+\psi )Q_{\bar{w}}, \\\label{Eq8} 
\frac{dQ_w}{dt}=&\left[\frac{{{(1-\rho_{_1})\phi }_wF}^2_w+(1-\rho_{_2}){\phi }_wF_wF_{\bar{w}}}{F_{\bar{w}}+F_w}\right]\left(1-\frac{Q}{K}\right)-{(\mu }_a+\psi )Q_w,\\ \label{Eq9} 
\frac{dF_{\bar{w}}}{dt}=&\frac{\psi }{2}Q_{\bar{w}}+\sigma F_w-{\mu }_{\bar{w}}F_{\bar{w}}, 
\\\label{Eq10} 
\frac{dF_w}{dt}=&\frac{\psi }{2}Q_w-\sigma F_w-{\mu }_wF_w, 
\end{align} 
with the understanding that $\left(1-\frac{Q}{K}\right)$ is always non-negative. This system of differential equations (\ref{Eq7} - \ref{Eq10}) explicitly includes the possibility of uninfected and infected offspring being produced by WI female mosquitoes \cite{hoffmann2011successful, turelli1995cytoplasmic, turelli2010cytoplasmic, yeap2011dynamics}, through the ${\rho_{_1}}$and ${\rho_{_2}}$ terms. This possibility is excluded in the model of \textit{Wolbachia} introduction studied in \cite{ndii2012modelling}. However, they adjusted for leakage by assuming there is a waiting time before WI offspring mature into either WI or WU adults. It is biologically plausible that some offspring are born uninfected by WI females \cite{turelli1995cytoplasmic} and the approach by \cite{ndii2012modelling} is another way of modelling the effect of losing incompatibility between WU females and WI males, which we have incorporated with the $\sigma $ term (see Figure \ref{Fig1}).  The descriptions of the parameters in the system of differential equations (\ref{Eq7} - \ref{Eq10}) are shown in Table \ref{Table1}.

We analyse the \textit{Wolbachia} invasive system (\ref{Eq7} - \ref{Eq10}) for the conditions that will enable the WI mosquitoes to propagate following their introduction into an \textit{Ae. aegypti} population that is na\"{i}ve to \textit{Wolbachia} infection. This is done as follows. Given a system of autonomous ordinary differential equations,
\begin{equation}\label{Eq11}
\frac{d\boldsymbol{X}}{dt}=f\left(\boldsymbol{X}\right),\text{where} \ \boldsymbol{X}\boldsymbol{,\ }f\left(\boldsymbol{X}\right)\in {\mathbb{R}}^n,  
\end{equation} 
the asymptotic behavior of the solutions of (\ref{Eq11}) starting near an equilibrium solution $\bar{\boldsymbol{X}}$ are determined by the eigenvalues of the associated Jacobian matrix defined as $\boldsymbol{J}\boldsymbol{=}\frac{\partial f}{\partial \boldsymbol{X}}$ evaluated at $\bar{\boldsymbol{X}}$\textbf{.}  We adopt this approach in this paper to understand \textit{Wolbachia} propagation in the \textit{Ae. aegypti} population. Also, the simulations of this model using published parameter values were done in MATLAB R2017a (Release M(2017) The MathWorks Inc, Natick, MA, USA).

\begin{figure}[h]
\begin{center}
\includegraphics[width=0.6\linewidth]{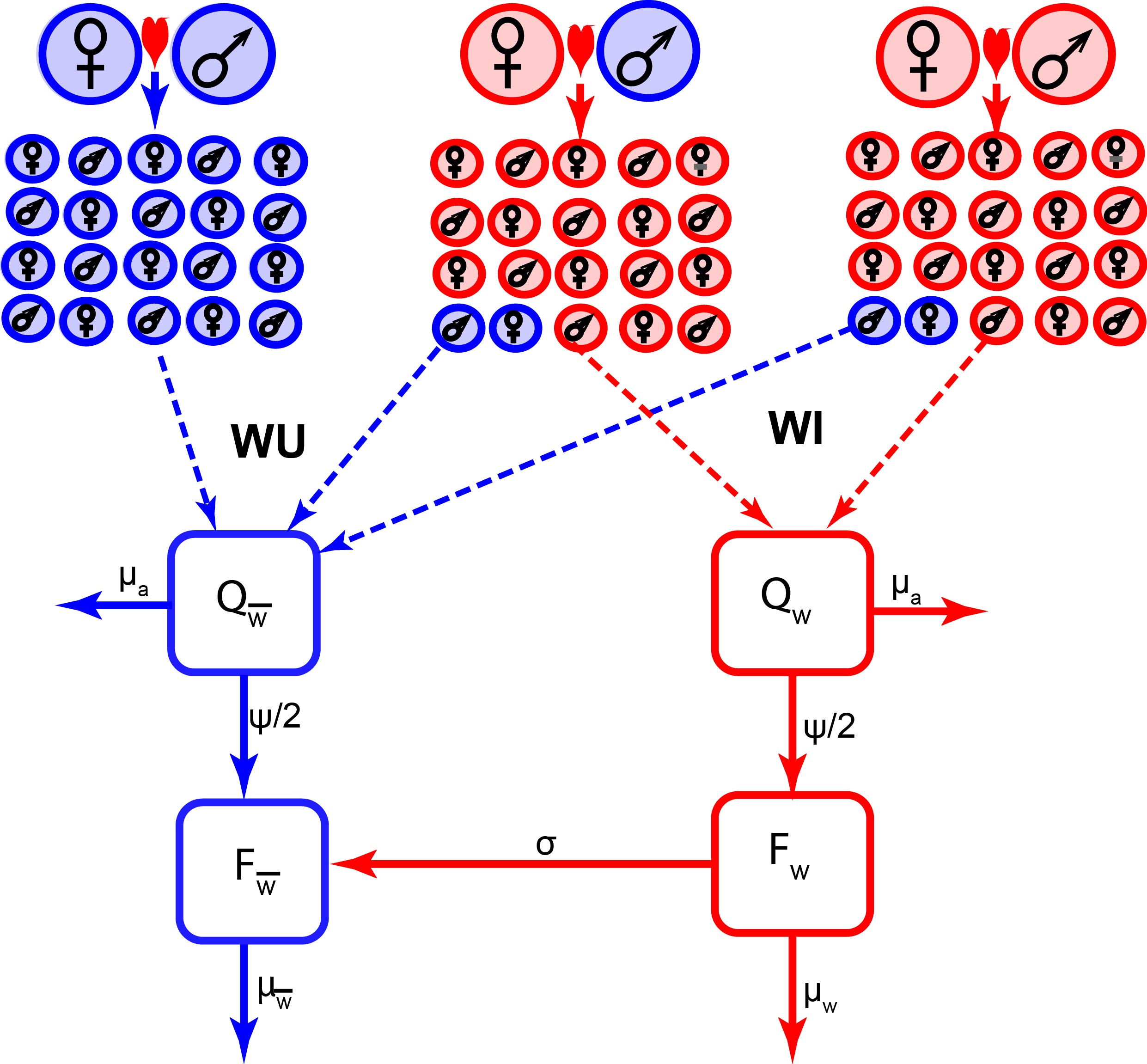}
\caption{\textbf{Schematic flow diagram of the \textit{Wolbachia} invasive model.} The WU population (blue color) is produced as a result of mating between adult WU females and males, WI females and WU males, WI females and males and loss of \textit{Wolbachia} infection by adult WI females. The WI population (red color) is produced by mating between adult WI females and males and cross-breeding between WI females and WU males. Due to CI, the offsprings as a result of mating between adult WU females and WI males are not viable.}
\label{Fig1}
\end{center}
\end{figure}
\begin{table}
\begin{center}
\caption{Parameter description and values for model (\ref{Eq7} - \ref{Eq10})}
\begin{tabular}{p{0.6in}p{1.6in}p{1.2in}p{1.0in}p{0.8in}} \hline 
Parameter & Description & Estimate[Range] & Unit & References \\ \hline 
$K$ & Carrying capacity of the aquatic stage & ${10}^6 [{10}^4,{10}^8]$ & Aquatic mosquito & Assumed \\
${\phi }_{\bar{w}}$ & Per capita egg laying rate for \textit{Wolbachia}
uninfected mosquitoes & $13 [12 - 18]$ & Eggs per day & \cite{Hoffmann2014, McMeniman2009,McMeniman2010} \\  
${\phi }_w$ & Per capita egg laying rate for \textit{Wolbachia-}infected mosquitoes & $11 [8 - 12]$ & Eggs per day & \cite{walker2011wmel, Hoffmann2014} \\ 
${\rho_{_1}}$ & The fraction of eggs that are WU as a result of mating between adult WI female and male mosquitoes & $0.05 [0-0.11]$ & Dimensionless & \cite{walker2011wmel} \\  
${\rho_{_2}}$ & The fraction of eggs that are WU as a result of mating between adult WU male and WI female mosquitoes & $0.05 [0- 0.1]$ & Dimensionless & \cite{walker2011wmel} \\ 
$\sigma $ & Per capita loss of \textit{Wolbachia} infection & $0.04 [0 - 0.1]$ & Per day & Assumed \\  
$b$ & Fraction of eggs that are female & $0.5 [0.34-0.6]$ & Dimensionless & \cite{arrivillaga2004,Lounibos2008} \\  
$\psi $ & Per capita maturation rate & $0.11 [0.1 -0.12]$  & Per day & \cite{walker2011wmel,Hoffmann2014} \\ 
${\mu }_a$ & Per capita aquatic death rate & $0.02$ & Per day & \cite{xue2017two} \\ 
${\mu }_{\bar{w}}$ & Per capita death rate of WU mosquitoes & $0.061 [0.02-0.09]$  & Per day & \cite{McMeniman2009, Styer2007} \\ 
${\mu }_w$ & Per capita death rate of WI mosquitoes & $0.068 [0.03-0.14]$ & Per day &  \cite{walker2011wmel, Styer2007} \\ \hline 
\end{tabular}
\label{Table1}
\end{center}
\end{table}
\newpage
\section{\textbf{\textit{Wolbachia} invasive model with $\mathbf{\rho_{_1}=0}$ and $\mathbf{\sigma =0}$}}
When $\rho_{1} = 0$ and $\sigma = 0$ in equations \eqref{Eq7} - \eqref{Eq10}, we assumed imperfect maternal transmission is only between WI females and WU males and that there is no loss of \textit{Wolbachia} infection in adult infected mosquitoes.  The \textit{Wolbachia} invasive model (\ref{Eq7} - \ref{Eq10}) with $\rho_{1} = 0$ and $\sigma = 0$ is biologically meaningful (see \ref{ap1}). That is, all solutions with non-negative initial conditions will remain non-negative for future times. Showing this for $\rho_1 \in $ (0, 1] and $\sigma > 0$ is also straight-forward.

The \textit{Wolbachia} invasive model \eqref{Eq7} - \eqref{Eq10} with $\rho_{_1}=0$ and $\sigma =0$ has four steady states: $E_1=(0, 0, 0, 0)$ - where there are no mosquitoes; $E_2=(Q^*_{\bar{w}},0,\ F^*_{\bar{w}},\ 0)$ - where the WU mosquitoes dominate and leads to the extinction of infected ones; $E_3=(0,Q^*_w,0, F^*_w)$ - where only the WI mosquitoes exist; and $E_4=(Q^*_{\bar{w}},\ Q^*_w,\ F^*_{\bar{w}},\ F^*_w)$ - where both WU and WI mosquitoes coexist. It is important for the control of arboviral infections that are transmitted by \textit{Ae. aegypti} mosquitoes to determine the nature of these stability points.

\subsection{No mosquitoes}
The $E_1$ point is trivial but not interesting as it is not realistic.  However, we can gain insights about the nature of this steady state solution by examining a special case when there is no interaction between WU and WI mosquitoes. We derived         
\begin{equation} \label{Eq12} 
R_{0\bar{w}}=\frac{{\phi }_{\bar{w}}\psi }{{2\mu }_{\bar{w}}\left({\mu }_a+\psi \right)},
\end{equation} 
and 
\begin{equation} \label{Eq13} 
R_{0w}=\frac{{\phi }_w\psi }{2{\mu }_w\left({\mu }_a+\psi \right)},
\end{equation} 
which are the thresholds that determine whether each population will persist or extinguish in the absence of interactions. The thresholds in equations \eqref{Eq12} and \eqref{Eq13} are derived from the stability conditions of the associated Jacobian matrix when no interaction exists between the uninfected and infected mosquitoes. That is, individual populations do not depend on each other. Equivalent expressions were given in \cite{xue2017two} for the dynamics that explicitly include the male mosquito compartments. Hence, for these models \eqref{Eq7}-\eqref{Eq10}, the two populations are extinguished whenever $R_{0\bar{w}}<1\ $and $R_{0w}<1,$ (see Figure \ref{Fig2}) as the reproductive terms cannot sustain the populations. Also, since the solutions are always non-negative for non-negative initial data, the solutions tend to the no-mosquito equilibrium point. However and except for the biological implications of using insecticides, appling insecticides and destroying breeding sites have been effective method in reducing mosquito populations \cite{amer2006larvicidal}. 
\begin{figure}[h]
\begin{center}
\includegraphics[width =0.8\linewidth]{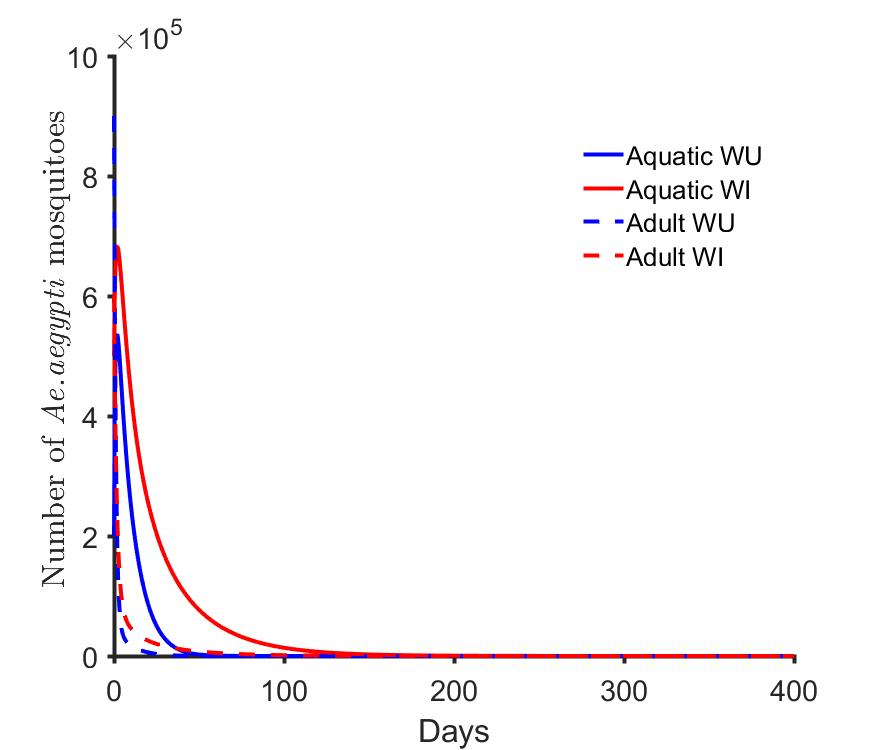}
\caption{\textbf{No mosquito equilibrium point}. In this simulation, we set $\phi_{w} =2$, $\phi_{\bar{w}}=1$, $R_{0w} = 0.71$, $R_{0\bar{w}} = 0.94$, $K = 2000000$, $Q_{\bar{w}}(0) = 200000$, $Q_{w}(0) = 500000$, $F_{\bar{w}}(0) = 900000$, and $F_w(0) = 600000$.}
\label{Fig2}
\end{center}
\end{figure}
\newpage
\subsection{WU mosquitoes-only}\label{sub1}
The WU equilibrium point is given as $E_2=\left(Q^*_{\bar{w}},0,\ F^*_{\bar{w}},\ 0\right)$ where
\begin{equation} \label{Eq14} 
Q^*_{\bar{w}}= K\left(1-\frac{1}{R_{0\bar{w}}}\right), 
\end{equation} 
\begin{equation} \label{Eq15} 
\ F^*_{\bar{w}}=\frac{\psi K}{2{\mu }_{\bar{w}}}\left(1-\frac{1}{R_{0\bar{w}}}\right), 
\end{equation} 
Hence  $R_{0\bar{w}}>1$ is necessary for the existence of this equilibrium point, otherwise, there will be no WU mosquitoes. Let us define the invasive reproductive number $\left(R_{0w|\bar{w}}\right)$ associated with the number of secondary offspring that would be WI due to the introduction of a typical WI adult mosquito into a population of WU adult mosquitoes. In a similar way to \cite{xue2017two}, we derived  $R_{0w|\bar{w}}$  as follows:
the WI compartments can be divided into the rate of appearance of new mosquitoes with \textit{Wolbachia} infection ($\mathcal{F}$) and other transition rates such as progression into adult mosquitoes with \textit{Wolbachia} infection and death rates $\left(\mathcal{V}\right):$
\begin{equation} \label{Eq16} 
\mathcal{F}=\left( \begin{array}{c}
\left(\frac{{\phi }_wF^2_w+\left(1-\rho_{_2}\right){\phi }_wF_wF_{\bar{w}}}{F_{\bar{w}}+F_w}\right)\left(1-\frac{Q}{K}\right) \\ 
0 \end{array}
\right), 
\end{equation} 
\begin{equation} \label{Eq17} 
\mathcal{V}=\left( \begin{array}{c}
\left({\mu }_a+\psi \right)Q_w \\ 
-\frac{\psi Q_w}{2}+{\mu }_wF_w \end{array}
\right). 
\end{equation} 

Next, we introduce the matrices $F$ and $V$ with components $F_{ij}={\left.\frac{\partial {\mathcal{F}}_i}{\partial x_j}\right|}_{E_2}$ and $V_{ij}={\left.\frac{\partial {\mathcal{V}}_i}{\partial x_j}\right|}_{E_2}$, where the $x_js$ represent the infected compartments $Q_w$ and $F_w$. Hence,
\begin{equation} \label{Eq18} 
F=\left( \begin{array}{cc}
0 & \frac{{\phi }_w(K-{Q_{\bar{w}}}^*)(1-\rho_{_2})}{K} \\ 
0 & 0 \end{array}
\right), 
\end{equation} 
\begin{equation} \label{Eq19} 
V=\left( \begin{array}{cc}
\mu_a + \psi  & 0 \\ -\frac{\psi }{2} & {\mu }_w \end{array}
\right) 
\end{equation} 
and the next-generation matrix is 
\begin{equation} \label{Eq20} 
FV^{-1}=\left( \begin{array}{cc}
\frac{{\psi \phi }_w(K-{Q_{\bar{w}}}^*)(1-\rho_{_2})}{2({\mu }_a+\psi ){\mu }_wK} & \frac{{\phi }_w(K-{Q_{\bar{w}}}^*)(1-\rho_{_2})}{{\mu }_wK} \\ 
0 & 0 \end{array}
\right). 
\end{equation} 
Hence the invasive reproductive number is 
\begin{equation} \label{Eq21} 
R_{0w|\bar{w}}=\lambda(FV^{-1})=\frac{{\phi }_w{\mu }_{\bar{w}}(1-\rho_{_2})}{{\phi }_{\bar{w}}{\mu }_w}=\frac{R_{0w}(1-\rho_{_2})}{R_{0\bar{w}}}, 
\end{equation} 
where $\lambda(M)$ is the spectral radius of $M$. The factor $(1-\rho_{_2} )$ shows the effect of the proportion of aquatic stage mosquitoes that are WI as a result of mating between WU male mosquitoes and WI female mosquitoes on the likelihood of the WI mosquitoes to replace the WU ones.

The Jacobian at $E_2$ is given as
\begin{equation} \label{Eq22} 
J\left(E_2\right)=\left( \begin{array}{cccc}
-\left({\mu }_a+\psi \right)R_{0\bar{w}} & \left({\mu }_a+\psi \right)(1-R_{0\bar{w}}) & \frac{\phi_{\bar{w}}}{R_{0\bar{w}}} &\frac{\left({\rho_{_2} {\phi }_w-\phi }_{\bar{w}}\right)}{R_{0\bar{w}}}\\
0 & -\left({\mu }_a+\psi \right)& 0 & \frac{{\phi }_w\left(1-\rho_{_2} \right)}{R_{0\bar{w}}}\\
\frac{\psi }{2}&  0  & -{\mu }_{\bar{w}}& 0 \\
0 & \frac{\psi }{2} & 0 & -{\mu }_w \end{array}
\right). 
\end{equation} 
The characteristic equation for this Jacobian is given as 
\begin{equation} \label{Eq23} 
P\left(\lambda \right):=({\lambda }^2+a_1\lambda +a_2)({\lambda }^2+b_1\lambda +b_2)=0, 
\end{equation} 
where
\begin{equation} \label{Eq24} 
a_1=\frac{2{\mu }^2_{\bar{w}}+{\phi }_{\bar{w}}\psi }{2{\mu }_{\bar{w}}}>0, 
\end{equation} 
\begin{equation}\label{Eq25}
a_2=\mu _{\bar{w}}({\mu }_a+\psi )(R_{0\bar{w}}-1),
\end{equation} 
\begin{equation} \label{Eq26} 
b_1={\mu }_a+\psi +{\mu }_w>0,              
\end{equation} 
\begin{equation}\label{Eq27}
b_2=\mu _w({\mu }_a+\psi )(1-R_{0w|\bar{w}}).                                                                               
\end{equation} 
The equilibrium point $E_2$ is locally asymptotically stable whenever $R_{0w|\bar{w}}<1$ and $R_{0\bar{w}}>1$. This implies for this case that WI mosquitoes will not spread following their introduction if those conditions are satisfied. The condition $R_{0\bar{w}}>1$ is the same as the only condition given by Ndii et al. \cite{ndii2012modelling} for the stability of this point. We have additional condition ($R_{0w|\bar{w}}<1$) which states that the invasive reproductive rate of WI mosquitoes when introduced into a background of WU mosquitoes be less than one.

\subsection{WI mosquitoes-only}
The equilibrium point associated with WI\textit{ }mosquitoes only is\textbf{}
\begin{equation} \label{Eq28} 
E_3=\left(0,K\left(1-\frac{1}{R_{0w}}\right),0,\frac{\psi K}{2{\mu }_w}
 \left( 1-\frac{1}{R_{0w}}\right)\right).                                                          
\end{equation} 
As pointed out earlier that WI population dies out when  $R_{0w}\le 1$ , the equilibrium point ($E_3$) is expected to be unstable when${\ R}_{0w}\le 1$. Thus, the corresponding Jacobian is defined as
\begin{equation} \label{Eq29} 
J\left(E_3\right)=\left(\begin{array}{cccc}
 -\left({\mu }_a+\psi \right) & 0 &\frac{\rho_{_2} {\phi }_w}{R_{0w}} & 0\\
 \left({\mu }_a+\psi \right)({1-R}_{0w}) & -\left({\mu }_a+\psi \right)R_{0w} & -\frac{(1-\rho_{_2} )w}{R_{0w}}  & \frac{{\phi }_w}{R_{0w}}\\
\frac{\psi }{2} & 0 & -{\mu }_{\bar{w}} & 0\\ 
0 & \frac{\psi }{2}  &  0  & -{\mu }_w
 \end{array}
\right). 
\end{equation} 
Hence, the characteristic equation is
\[P\left(\lambda \right):=({\lambda }^2+c_1\lambda +c_2)({\lambda }^2+d_1\lambda +d_2)=0,\] 
where $c_1=\frac{{2\mu }^2_w+{\phi }_w\psi }{{2\mu }_w}>0$, $c_2={\mu }_w\left({\mu }_a+\psi \right)\left(R_{0w}-1\right)$, $d_1={\mu }_a+\psi +{\mu }_{\bar{w}}>0$, and $d_2=({\mu }_a+\psi )({\mu }_{\bar{w}}-\rho_{_2} {\mu }_w)$.
The eigenvalues of the quartic characteristic equation are negative or have negative real parts if $R_{0w}>1$ and ${\mu }_{\bar{w}}>\rho_{_2} {\mu }_w$. The condition ${\mu }_{\bar{w}}>\rho_{_2} {\mu }_w$ gives the fitness level of the adult WI that is sufficient for spread. As \textit{Wolbachia} infection decreases the fitness of infected mosquitoes, external support will be needed for WI mosquitoes to propagate \cite{walker2011wmel}. 

Following from the expression of the basic reproduction number (equation (\ref{Eq21})), the equilibrium point for adult female mosquitoes can be written as 
\begin{equation} \label{Eq30} 
F^{*}_{w }= \frac{\psi K}{{2\mu }_w}\left(1-\frac{(1-\rho_{_2} )}{{R_{0w|\bar{w}}R}_{0\bar{w}}}\right). 
\end{equation} 

The expression above shows that the WI-mosquito-only equilibrium can exist when ${R}_{0\bar{w}|w}<1$. 
The existence of endemic equilibria for${\ R}_{0w|\bar{w}}<1$ is an indicator of a backward bifurcation in the conventional infectious diseases modelling papers \cite{cui2008impact, dushoff1998backwards}. However, it is interesting to know that this equilibrium point is unstable whenever $R_{0w|\bar{w}}<\frac{\left(1-\rho_{_2} \right)\ }{R_{0\bar{w}}}(\Rightarrow R_{0w}<1)$ and it is locally asymptotically stable even if $R_{0w|\bar{w}}<1$, in as much as $R_{0w}>1$ and ${\mu }_{\bar{w}}>\rho_{_2} {\mu }_w$. For $R_{0w|\bar{w}}<1$, both the $E_2$ and $E_3$ equilibrium points are locally asymptotically stable in as much as $R_{0\bar{w}}>1$ for $E_2$, and $R_{0w}>1$ and ${\mu }_{\bar{w}}>\rho_{_2}{\mu }_w$ for $E_3$.
\begin{theorem}\label{th1}
Provided ${\mu }_{\bar{w}}>\rho_{_2}{\mu }_w$, the WI\textit{ } mosquitoes only equilibrium point $E_3$ is globally asymptotically stable whenever $R_{0w|\bar{w}}>1$ and $ R_{0w}\ge R_{0w}^*>1\ $, 
\end{theorem}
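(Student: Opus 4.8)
The plan is to show that $E_3$ attracts every solution with nonnegative initial data and positive infected components, and then to upgrade this to global asymptotic stability using the local asymptotic stability of $E_3$ already established above (the two quadratic factors of the characteristic polynomial at $J(E_3)$ are Hurwitz precisely because $R_{0w}>1$ and $\mu_{\bar w}>\rho_2\mu_w$). As a preliminary, I would record that the nonnegative cone is forward invariant (the well-posedness statement cited in the paper) and that the compact set
\[
\Omega=\Bigl\{(Q_{\bar w},Q_w,F_{\bar w},F_w)\ge 0:\ Q_{\bar w}+Q_w\le K,\ F_{\bar w}\le\tfrac{\psi K}{2\mu_{\bar w}},\ F_w\le\tfrac{\psi K}{2\mu_w}\Bigr\}
\]
is positively invariant and absorbing: on $\{Q=K\}$ one has $\dot Q=-(\mu_a+\psi)K<0$, and once $Q\le K$ the equations \eqref{Eq9}--\eqref{Eq10} pull $F_{\bar w}$ and $F_w$ below the stated levels. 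So it is enough to prove attractivity of $E_3$ for solutions starting in $\Omega$. The remaining argument has two parts: extinction of the uninfected class, and a reduction to the planar \textit{Wolbachia}-only subsystem via the theory of asymptotically autonomous systems.

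The core step is to show $Q_{\bar w}(t)\to0$ and $F_{\bar w}(t)\to0$; heuristically, once infected mosquitoes are abundant the logistic factor $1-Q/K$ throttles uninfected recruitment, and the hypotheses $R_{0w|\bar w}>1$ and $R_{0w}\ge R_{0w}^\ast$ are exactly what makes ``abundant enough'' quantitative. I would proceed in two stages. (i) \emph{Uniform persistence of the infected class}: since $R_{0w}>1$ the infected population can invade $E_1$ and since $R_{0w|\bar w}>1$ it can invade $E_2$, so $\{E_1,E_2\}$ is an acyclic covering of the boundary flow and persistence theory (Butler--Freedman--Waltman; Thieme) gives $\liminf_{t\to\infty}Q_w(t)\ge q_\star>0$, with $q_\star$ increasing toward $K$ as $R_{0w}$ grows. (ii) \emph{Comparison on the uninfected subsystem}: on $\Omega$ and for large $t$, $1-Q/K\le 1-q_\star/K$ and
\[
\frac{\phi_{\bar w}F_{\bar w}^2+\rho_2\phi_w F_wF_{\bar w}}{F_{\bar w}+F_w}\ \le\ \max(\phi_{\bar w},\rho_2\phi_w)\,F_{\bar w},
\]
so $(Q_{\bar w},F_{\bar w})$ is dominated by the solution of the linear cooperative system $\dot u=\max(\phi_{\bar w},\rho_2\phi_w)(1-q_\star/K)v-(\mu_a+\psi)u$, $\dot v=\tfrac{\psi}{2}u-\mu_{\bar w}v$, which decays exponentially exactly when $\tfrac{\psi}{2}\max(\phi_{\bar w},\rho_2\phi_w)(1-q_\star/K)<(\mu_a+\psi)\mu_{\bar w}$. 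When $\phi_{\bar w}\le\rho_2\phi_w$ this subcriticality is equivalent to $\mu_{\bar w}>\rho_2\mu_w$ and needs no constraint on $R_{0w}$; when $\phi_{\bar w}>\rho_2\phi_w$ it additionally requires $q_\star$, hence $R_{0w}$, to be large, which is the content of $R_{0w}\ge R_{0w}^\ast$. Either way $(Q_{\bar w}(t),F_{\bar w}(t))\to(0,0)$.

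Once $Q_{\bar w}\to0$ and $F_{\bar w}\to0$, the pair $(Q_w,F_w)$ satisfies an asymptotically autonomous planar system whose limit system is $\dot Q_w=\phi_wF_w(1-Q_w/K)-(\mu_a+\psi)Q_w$, $\dot F_w=\tfrac{\psi}{2}Q_w-\mu_wF_w$. This limit system is cooperative on the open positive quadrant, dissipative, and has $(Q_w^\ast,F_w^\ast)$ as its unique positive equilibrium (existing because $R_{0w}>1$); by planar monotone-dynamics theory, or via a Volterra--Goh Lyapunov function of the form $\int_{Q_w^\ast}^{Q_w}\frac{s-Q_w^\ast}{s}\,ds+\kappa\bigl(F_w-F_w^\ast-F_w^\ast\ln(F_w/F_w^\ast)\bigr)$ with a suitable $\kappa>0$, it is globally asymptotically stable there. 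The theory of asymptotically autonomous systems (Markus; Thieme) then forces the $\omega$-limit set of every forward-bounded solution of the full four-dimensional system to equal $\{(0,Q_w^\ast,0,F_w^\ast)\}=\{E_3\}$; together with the local asymptotic stability of $E_3$ recorded above this gives the asserted global asymptotic stability. (An alternative would be a single composite Lyapunov function, logarithmic in $Q_w,F_w$ and linear in $Q_{\bar w},F_{\bar w}$, together with LaSalle's invariance principle.)

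The step I expect to be the main obstacle is the extinction of the uninfected class. The uninfected recruitment term $\frac{\phi_{\bar w}F_{\bar w}^2+\rho_2\phi_w F_wF_{\bar w}}{F_{\bar w}+F_w}$ is not monotone in $F_w$ and is not bounded by $\rho_2\phi_w F_{\bar w}$ when $\phi_{\bar w}>\rho_2\phi_w$, so a naive comparison fails; one must first secure a uniform lower bound on the infected aquatic class $Q_w$ (equivalently a sharp enough upper bound on $1-Q/K$) strong enough to render the linearised uninfected subsystem subcritical, and $R_{0w}^\ast$ is exactly the threshold value of $R_{0w}$ at which this is achieved, with $\mu_{\bar w}>\rho_2\mu_w$ handling the complementary fecundity regime $\phi_{\bar w}\le\rho_2\phi_w$.
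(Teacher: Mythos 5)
Your overall architecture is the reverse of the paper's: the paper first proves convergence of the infected pair $(Q_w,F_w)$ to $(Q_w^*,F_w^*)$ with a Volterra--Goh Lyapunov function built only from the infected compartments (their $R_{0w}^*$ arises as the condition making $dV/dt<0$ along trajectories of the full system), and only afterwards kills the uninfected class by a comparison/limsup argument that produces a quadratic inequality in $Q_{\bar w}^{\infty}$ whose coefficients are nonnegative exactly when $\mu_{\bar w}>\rho_2\mu_w$ and $R_{0w}>R_{0\bar w}$. You instead try to prove extinction of $(Q_{\bar w},F_{\bar w})$ first and then reduce to the planar \textit{Wolbachia}-only limit system via Markus--Thieme; that second half (cooperativity, uniqueness of the positive equilibrium for $R_{0w}>1$, asymptotically autonomous limiting argument) is sound, but the first half has a genuine gap.

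The gap is in your step (i)--(ii) hinge: abstract uniform persistence theory (Butler--Freedman--Waltman, Thieme) only yields \emph{some} $q_\star>0$ with $\liminf_{t\to\infty}Q_w(t)\ge q_\star$; it gives no quantitative relation between $q_\star$ and $R_{0w}$, and in particular does not justify your assertion that ``$q_\star$ increases toward $K$ as $R_{0w}$ grows.'' Yet your comparison step needs exactly such a bound: in the regime $\phi_{\bar w}>\rho_2\phi_w$ subcriticality of the linear system requires $1-q_\star/K$ small, and even in the regime $\phi_{\bar w}\le\rho_2\phi_w$ your claimed equivalence of $\tfrac{\psi}{2}\rho_2\phi_w\bigl(1-q_\star/K\bigr)<(\mu_a+\psi)\mu_{\bar w}$ with $\mu_{\bar w}>\rho_2\mu_w$ tacitly uses $1-q_\star/K\le 1/R_{0w}$; without that, the condition you actually obtain is $\rho_2\mu_w R_{0w}<\mu_{\bar w}$, which is far stronger than the hypothesis and generally false here. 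So the needed input is precisely $\limsup_{t\to\infty}\bigl(1-Q(t)/K\bigr)\le 1/R_{0w}$, which is what the paper extracts by first proving $(Q_w,F_w)\to(Q_w^*,F_w^*)$; declaring that ``$R_{0w}\ge R_{0w}^*$ is exactly what makes this achievable'' substitutes the conclusion for the proof (and reinterprets $R_{0w}^*$, which in the paper is a different, trajectory-dependent quantity coming out of the Lyapunov estimate). To close your version you would have to either derive such a quantitative persistence bound directly, or, as the paper does, establish the infected-compartment convergence first and let the extinction of $(Q_{\bar w},F_{\bar w})$ follow from it.
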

\begin{proof}
When $R_{0w|\bar{w}}>1$ then $R_{0w}>\frac{R_{0\bar{w}}}{1-\rho_{_2}}$ and this implies$R_{0w}>R_{0\bar{w}}$. Define a Lyapunov function $V$ as
\begin{equation} \label{Eq31} 
V=\frac{\psi}{2\mu_{w}\left(\mu_a +\psi \right)} \int^{Q_{w}}_{Q^*_{w}}{\left(1-\frac{Q^*_w}{y}\right)}dy+\frac{1}{\mu _w}\int^{F_{w}}_{F^*_{w}}{\left(1-\frac{F^*_{w}}{y}\right)}dy. 
\end{equation} 
Differentiating equation \eqref{Eq31} with respect to time, we have
\begin{equation} \label{Eq32} 
\frac{dV}{dt}=\frac{\psi }{2{\mu }_w\left({\mu }_a+\psi \right)}\left(1-\frac{Q^*_w}{Q_w}\right)\frac{dQ_w}{dt}+\ \frac{1}{{\mu }_w}\left(1-\frac{F^*_w}{F_w}\right)\frac{dF_w}{dt}. 
\end{equation} 
Substituting the expression for the differential equations \eqref{Eq8} and \eqref{Eq10} we have,

\begin{align}\nonumber 
\frac{\psi }{{2\mu }_w\left({\mu }_a+\psi \right)}\left(1-\frac{Q^*_w}{Q_w}\right)\frac{dQ_w}{dt}=&\frac{\psi }{{2\mu }_w\left({\mu }_{AV}+\psi \right)}\left(1-\frac{Q^*_w}{Q_w}\right)\\\label{Eq33}
&\left(U_1\left(1-\frac{Q}{K}\right)-{(\mu }_a+\ \psi )Q_w\right) ,  
\end{align} 
where  $U_1=\left(\frac{{\phi }_wF^2_w+\left(1-\rho_{_2}\right){\phi }_wF_wF_{\bar{w}}}{F_{\bar{w}}+F_w}\right)$, and
\begin{equation} \label{Eq34} 
\frac{1}{{\mu }_w}\left(1-\frac{F^*_w}{F_w}\right)\frac{dF_w}{dt}=\frac{1}{{\mu }_w}\left(1-\frac{F^*_w}{F_w}\right)\left(\frac{\psi Q_w}{2}-{\mu }_wF_w\right). 
\end{equation} 
From equation \eqref{Eq33},
\begin{align}\nonumber
\frac{\psi }{{2\mu }_w\left({\mu }_a+\psi \right)}\left(1-\frac{{Q_w}^*}{Q_w}\right)\frac{dQ_w}{dt}=&\frac{\psi }{{2\mu }_w\left({\mu }_a+\psi \right)}\left(1-\frac{Q^*_w*}{Q_w}\right)U_1\left(1-\frac{Q}{K}\right)\\ \label{Eq35} 
&-\frac{\psi Q_w}{{2\mu }_w}+\frac{\psi Q^*_w}{{2\mu }_w} 
\end{align} 
and from equation \eqref{Eq34},
\begin{equation} \label{Eq36} 
\frac{1}{{\mu }_w}\left(1-\frac{F^*_w}{F_w}\right)\frac{dF_w}{dt}=\frac{\psi Q_w}{{2\mu }_w}-\frac{\psi Q_wF^*_w}{2{\mu }_wF_w}-F_w+F^*_w. 
\end{equation} 
Adding equations \eqref{Eq35} and \eqref{Eq36} yields
\begin{equation} \label{Eq37} 
\frac{dV}{dt}=\frac{\psi }{{2\mu }_w\left({\mu }_a+\psi \right)}\left(1-\frac{Q^*_w}{Q_w}\right)U_1\left(1-\frac{Q}{K}\right)+\frac{\psi Q^*_w}{{2\mu }_w}-\frac{\psi Q_wF^*_w}{2{\mu }_wF_w}-F_w+F^*_w. 
\end{equation} 
Rearrangement and some manipulations give,
\begin{align}\nonumber
\frac{dV}{dt}&=\frac{\psi }{{2\mu }_w\left({\mu }_a+\psi \right)}\left(1-\frac{Q^*_w}{Q_w}\right)U_1\left(1-\frac{Q}{K}\right)+F^*_w\left(2- \frac{Q_wF^*_w}{Q^*_wF_w}-\frac{Q^*_wF_w}{Q_wF^*_w}\right)\\ \label{Eq38} 
&-F_w\left(1-\frac{Q^*_w}{Q_w}\right).  
\end{align} 
Thus,
\begin{align}\nonumber
\frac{dV}{dt}=&R_{0w}F_w\left(1-\frac{Q^*_w}{Q_w}\right)\left(\frac{F_w+\left(1-\rho_{_2}\right)F_{\bar{w}}}{F_w+F_{\bar{w}}}\left(1-\frac{Q}{K}\right)-\frac{1}{R_{0w}}\right)\\ \label{Eq39} &+F^*_w\left(2- \frac{Q_wF^*_w}{Q^*_wF_w}-\frac{Q^*_wF_w}{Q_wF^*_w}\right), 
\end{align} 
\begin{align}\nonumber
\frac{dV}{dt}= &F_w\left(1-\frac{Q^*_w}{Q_w}\right)\left(\frac{R_{0w}(F_w+\left(1-\rho_{_2}\right)F_{\bar{w}})}{F_w+F_{\bar{w}}}\left(1-\frac{Q}{K}\right)-1\right)\\ \label{Eq40} 
&+F^*_w\left(2- \frac{Q_wF^*_w}{Q^*_wF_w}-\frac{Q^*_wF_w}{Q_wF^*_w}\right). 
\end{align} 
Since $Q_w \le K$, $\left(1-\frac{Q^*_w}{Q_w}\right)<0$ when $Q_w < Q^*_{w}$ and $0<\left(1-\frac{Q^*_w}{Q_w}\right)\le \frac{1}{R_{0w}}$ when $Q^*_w<Q_w\le K$. From equation \eqref{Eq40}, when $\left(1-\frac{Q^*_w}{Q_w}\right)<0$, we set $R_{0w}^*=\max\left(\frac{K(F_w+F_{\bar{w}})}{R_{0w}(K-Q)(F_w+\left(1-\rho_{_2}\right)F_{\bar{w}})}\right)$ and with $\left(2-\frac{Q_wF^*_w}{Q^*_wF_w}-\frac{Q^*_wF_w}{Q_wF^*_w}\right) \le 0$ implies $\frac{dV}{dt}<0$ for all $Q_w < {Q}^*_w$. When $0<\left(1-\frac{Q^*_w}{Q_w}\right)\le \frac{1}{R_{0w}}$, $\frac{dV}{dt}<0$ since 
$\max\left(\frac{(F_w+\left(1-\rho_{_2}\right)F_{\bar{w}})}{F_w+F_{\bar{w}}}\left(1-\frac{Q}{K}\right)\right)<1$. Hence, it follows from the Krasovkii-Lasalle theorem \cite{lasalle1960some, krasovskii1959nekotorye} that
\begin{equation} \label{Eq41} 
\left(Q_w,\ F_w\right)\to \left(Q^*_w,\ F^*_w\right)\ \ \ \ \ \mathrm{as}\ \ t\to \infty . 
\end{equation} 
It remains to show that $\left(Q_{\bar{w}},\ F_{\bar{w}}\right)\to \left(0,\ 0\right)\ \mathrm{as}\ t\to \infty $. In this case, ${\mathop{\mathrm{lim}\mathrm{}\mathrm{sup}}_{t\to \infty } Q_w=\ }Q^*_w$ and ${\mathop{\mathrm{lim}\mathrm{}\mathrm{sup}}_{t\to \infty } F_w=\ }F^*_w$. Hence, there exists a sufficiently small number $\epsilon >0$ and $t_1>0$, such that ${\mathop{\mathrm{lim}\mathrm{}\mathrm{sup}}_{t\to \infty } F_w\le \ }F^*_w+\epsilon $  and ${\mathop{\mathrm{lim}\mathrm{}\mathrm{sup}}_{t\to \infty } Q_w\le \ }Q^*_w+\epsilon $ for all $t>t_1$. It follows from equation \eqref{Eq7} that for$\ t>t_1$,
\begin{equation} \label{Eq42} 
\frac{{dQ}_{\bar{w}}\left(t\right)}{dt}\le \left[\frac{{\phi }_{\bar{w}}F^{\infty 2}_{\bar{w}}+\rho_{_2}{\phi }_w\left(F^*_w+\epsilon \mathrm{\ }\right)F^{\infty }_{\bar{w}}}{\left(F^{\infty }_{\bar{w}}+F^*_w+\epsilon \mathrm{\ }\right)}\right]\left[1-\frac{Q^*_w+\epsilon }{K}\right]-{(\mu }_a+\psi )Q_{\bar{w}}(t), 
\end{equation} 
where $F^{\infty }_{\bar{w}}={\mathop{\mathrm{lim}\mathrm{}\mathrm{sup}}_{t\to \infty } F_{\bar{w}}\left(t\right)}$. Hence, by the comparison theorem \cite{smith1995theory} and letting$\ \epsilon \to 0$
\begin{equation} \label{Eq43} 
Q^{\infty }_{\bar{w}}={\mathop{\mathrm{lim}\mathrm{}\mathrm{sup}}_{t\to \infty } Q_{\bar{w}}\left(t\right)\ }\le \frac{{\phi }_{\bar{w}}F^{\infty 2}_{\bar{w}}+\rho_{_2}{\phi }_wF^*_wF^{\infty }_{\bar{w}}}{(F^{\infty }_{\bar{w}}+F^*_w){(\mu }_a+\psi )R_{0w}}. 
\end{equation} 
If $F^{\infty }_{\bar{w}}={\mathop{\mathrm{lim}\mathrm{}\mathrm{sup}}_{t\to \infty } F_{\bar{w}}\left(t\right)\ }=0$ then $Q^{\infty }_{\bar{w}}\le 0$. Otherwise,
\begin{equation} \label{Eq44} 
{\mathop{\mathrm{lim}\mathrm{}\mathrm{sup}}_{t\to \infty } F_{\bar{w}}\left(t\right)\ }\le \frac{\psi Q^{\infty }_{\bar{w}}}{2{\mu }_{\bar{w}}\ } 
\end{equation} 
and                                                                                                                   
\begin{equation} \label{Eq45} 
{Q^{\infty }_{\bar{w}}}^2\left(\frac{\psi }{2{\mu }_{\bar{w}}}\left[1-\frac{R_{0\bar{w}}}{R_{0w}}\right]\ \right)+Q^{\infty }_{\bar{w}}F^*_w\left(\ 1-\frac{\rho_{_2}{\mu }_w\ }{{\mu }_{\bar{w}}}\right)\le 0\  
\end{equation} 
Hence,
\begin{equation} \label{Eq46} 
\frac{-2{\mu }_{\bar{w}}F^*_w\mathrm{\ }\left(1-\frac{\rho_{_2}{\mu}_w\ }{{\mu }_{\bar{w}}}\right)}{\psi \left(1-\frac{R_{0\bar{w}}}{R_{0w}}\right)}\le Q^{\infty }_{\bar{w}} \le 0.
\end{equation} 
Thus, it is immediate that $Q^{\infty }_{\bar{w}}={\mathop{\mathrm{lim}\mathrm{sup}}_{t\to \infty} Q_{\bar{w}}\left(t\right)\ }=0$ and ${\mathop{\mathrm{lim}\mathrm{sup}}_{t\to \infty } F_{\bar{w}}\left(t\right)\ }=0$. Hence,$\left(Q_{\bar{w}}, F_{\bar{w}}\right)\to \left(0,\ 0\right)$  as  $t\to \infty$. This concludes the proof.
\end{proof}

It can be verified that ${\mu }_{\bar{w}}>\rho_{_2} {\mu }_w$ implies${\ Q}_w\le Q^*_w\ $provided${\ Q}_w\left(0\right)\le Q^*_w$. ${\mu }_{\bar{w}}>\rho_{_2} {\mu }_w$ was also used in equation \eqref{Eq45} above. In Figure \ref{Fig3}, we examined theorem \eqref{th1} by simulating equations \eqref{Eq7}-\eqref{Eq10} with parameters that satisfy these conditions.
\begin{figure}[h]
\begin{center}
\includegraphics[width =0.7\linewidth]{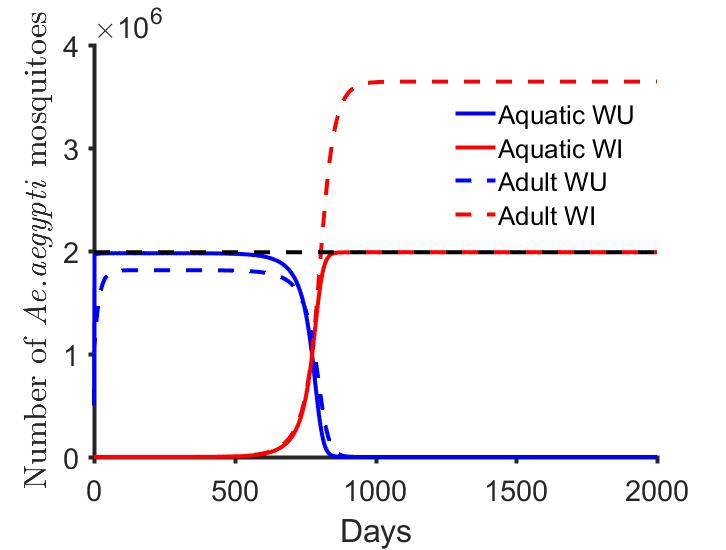}
\caption{\textbf{Global stability of \textit{Ae. aegypti} model \eqref{Eq7}-\eqref{Eq10}}. In this simulation, $R_{0w|\bar{w}}\boldsymbol{=}1.75$, $R_{0\bar{w}}\boldsymbol{=}91.67$, $R_{0w}\boldsymbol{=}169.23$, $R^*_{0w}=1.30$, $\rho_{_2} =0.05$, ${\mu }_{\bar{w}}=0.03$ (we assumed external factor to compensate for decrease in fitness), ${\mu }_w=0.07$, $K=2000000$ and the initial data are  $Q_{\bar{w}}\left(0\right)=500000,\ Q_w\left(0\right)=0,\ F_{\bar{w}}\left(0\right)=1000000,\ $ and $F_w\left(0\right)=1.$ The black dash line is the steady-state solution for $Q_w$.}
\label{Fig3}
\end{center}
\end{figure}
\subsection{Both mosquitoes}
 An interesting situation is to have both WI and uninfected mosquitoes in the \textit{Ae. aegypti} population. In such case, we will want the majority of the mosquitoes to be infected with \textit{Wolbachia}. For the systems of differential equations \eqref{Eq7}-\eqref{Eq10}, the co-existence equilibrium point is given as
\begin{equation} \label{Eq47} 
E_4=\left(d_1F^*_{\bar{w}}, d_2{F^*_w}, d_3{F^*_w},{F^*_w}\right) 
\end{equation} 
where, $F^*_w=\frac{K\psi }{2(\mu_{\bar{w}}d_3+\mu _{w})}\left[\frac{R_{0w}\left(1+\left(1-\rho_{_2} \right)d_3\right)-\left(1+d_3\right)}{R_{0w}\left(1+\left(1-\rho_{_2} \right)d_3\right)}\right]$,  $d_1=\frac{2{\mu }_{\bar{w}}}{\psi },d_2=\frac{{2\mu }_w}{\psi }$,
\newline $\ d_3=\frac{R_{0w|\bar{w}}({\mu }_{\bar{w}}-{\rho_{_2} \mu }_w)}{{\mu }_{\bar{w}}(1-\rho_{_2} )(1-R_{0\bar{w}|w})}$. From \eqref{Eq47}, it can be observed immediately that  $\rho_{_2} <1$, \newline $ R_{0w}>1$, $ R_{0\bar{w}}>1$ and either of these two conditions: 
\begin{enumerate}
\item  $ R_{0w|\bar{w}}<1 $ and ${\mu }_{\bar{w}}>\rho_{_2} {\mu }_w,$
\item  $R_{0w|\bar{w}}>1$ and ${\mu }_{\bar{w}}<\rho_{_2} {\mu }_w,$
\end{enumerate}
must be true for the existence of this equilibrium point. The conditions show key parameter relationships for both WI and WU mosquitoes to sustain themselves. If $ R_{0w|\bar{w}}<1 $, WU mosquitoes has a tolerable death rate that allows WI mosquitoes to survive and similar tolerance for $ R_{0w|\bar{w}}>1$. Hence all of the equilibrium points of the \textit{Wolbachia} spread model \eqref{Eq7}-\eqref{Eq10} can co-exist when $R_{0w|\bar{w}}<1\ $ \cite{xue2017two} and only the co-existence and WI-mosquito-only equilibrium points are locally stable when $\ R_{0w|\bar{w}}>1$. To establish whether this co-existence equilibrium point is stable or not, we use the general Jacobian expressions (see appendix \eqref{ap2}) to derive its corresponding characteristic equation:
\begin{equation} \label{Eq48} 
P\left(\lambda \right):={\lambda }^4+e_1{\lambda }^3+e_2{\lambda }^2+e_3\lambda +e_4=0, 
\end{equation} 
where the coefficients are given by the following expressions:
\begin{align} \label{Eq49} 
e_1 &=\left(F_1+{\mu }_{\bar{w}}\right)+\left(F_2+{\mu }_w\right), \\\label{Eq50} 
e_2 &= (F_2{\mu }_w-\frac{\psi B_2}{2}) +\left(F_1+{\mu }_{\bar{w}}\right)\left(F_2+{\mu }_w\right)+\left(F_1{\mu }_{\bar{w}}-\frac{\psi A_1}{2}\right)-T_1T_2, \\\label{Eq51} \nonumber
e_3 &= \left(F_2{\mu }_w-\frac{\psi B_2}{2}\right)\left(F_1+{\mu }_{\bar{w}}\right)+\left(F_2+{\mu }_w\right)\left(F_1{\mu }_{\bar{w}}-\frac{\psi A_1}{2}\right)\\
&-T_2\left(T_1{\mu }_w-\frac{\psi A_2}{2}\right)-{T}_1\left(T_2{\mu }_w-\frac{\psi B_1}{2}\right),\\  \label{Eq52} 
e_4 &=\left(F_1{\mu }_{\bar{w}}-\frac{\psi A_1}{2}\right)\left(F_2{\mu }_w-\frac{\psi B_2}{2}\right)-\left(T_1{\mu }_w-\frac{\psi A_2}{2}\right)\left(T_2{\mu }_{\bar{w}}-\frac{\psi B_1}{2}\right). 
\end{align} 

One way to establish the nature of the equilibrium point is to apply the Lienard and Chipart criterion \cite{lienard1914signe}, or the popular Routh-Hurwitz Criteria \cite{martcheva2015introduction}. For this criterion, it is necessary and sufficient to show that the coefficients of the quartic equation \eqref{Eq48} are greater than zero and that $e_1e_2e_3>e^2_3+e^2_1e_4\ $for the equilibrium point to be locally asymptotically stable. We investigate this equilibrium point by using a randomization method to show that there exists a parameter set for which the conditions above are satisfied and the quartic equation \eqref{Eq48} has negative roots (i.e. the eigenvalues of the associated Jacobian have negative real parts). For the parameters in Table 1, we sample 10000 parameter combinations assuming uniform distributions for the ranges listed in Table 1 and check whether the equilibrium point is asymptotically stable or not when condition (1) or (2) is satisfied. We found for the set of parameters satisfying condition (1) that the equilibrium point is unstable. For condition (2), the equilibrium point is locally asymptotically stable but with unrealistic parameter set (Figure \ref{Fig4}a). Despite the unrealistic nature of the parameter set satisfying condition (2) above, that equilibrium point is globally asymptotically stable as demonstrated numerically (Figure \ref{Fig4}b). Also, Table \ref{Table2} below lists the conditions for local asymptotic stability of the equilibrium points.

\begin{figure}[h]
\begin{center}
\includegraphics[width=1.0\linewidth]{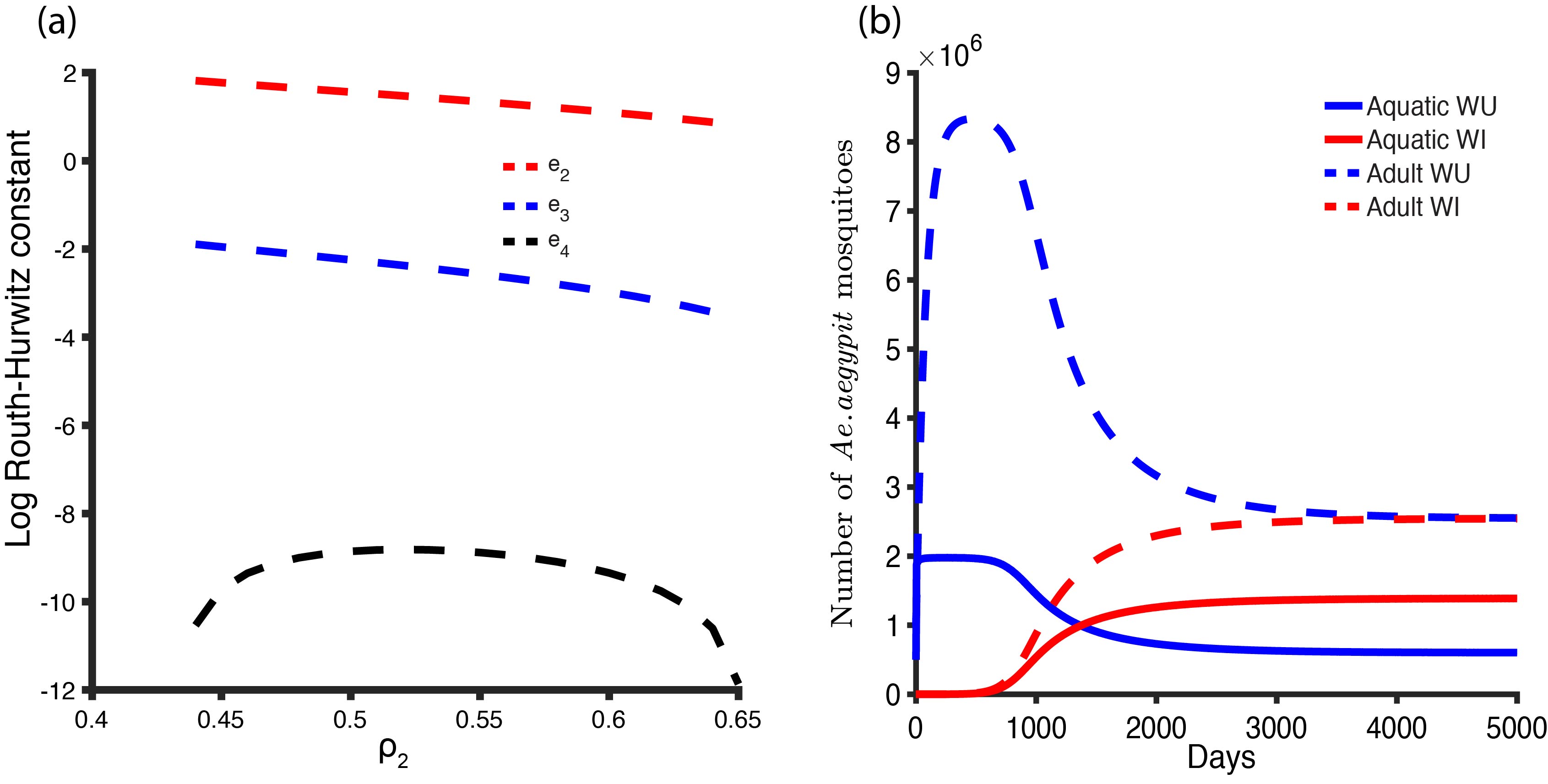}
\caption{\textbf{The stability conditions of the co-existence equilibrium point.} (a) The nature of the stability point changes with $\rho_{_2} .$ The Routh- Hurtwitz conditions are satisfied for condition (2) of the co-existence equilibrium points. (b) We set $R_{0w|\bar{w}}\boldsymbol{=}1.44$, $R_{0w}\boldsymbol{=}282.1$, $R_{0\bar{w}}\boldsymbol{=}97.6$, $\rho_{_2} =0.5$, ${\mu }_{\bar{w}}=0.03$, ${\mu }_w=0.013$, $K=2000000$ to show its global stability. The initial data are $Q_{\bar{w}}\left(0\right)=500000,\ Q_w\left(0\right)=0,\ F_{\bar{w}}\left(0\right)=1000000,\ $ and $F_w\left(0\right)=1.$}
\label{Fig4}
\end{center}
\end{figure}

\begin{table}[h]
\begin{center}
\caption{Conditions for stability of the equilibrium points}
\begin{tabular}{ p{2.0in} p{3.0in} } \hline 
\textbf{Equilibrium point} & \textbf{Stability conditions} \\ \hline 
$E_1$ (No\ Mosquitoes)  & $R_{0w}<1$ and $R_{0\bar{w}}<1$ \\
$E_{2}$ (Only WU Mosquitoes) & $R_{0w|\bar{w}}<1$ and $R_{0\bar{w}}>1$ \\ 
$E_{3}$ (Only WI Mosquitoes) & $R_{0w}>1$ and $\mu _{\bar{w}}<\rho_{2}\mu_{w}$ \\ 
$E_{4}$ (Both Mosquitoes) &  $\rho_{_2} <1$, $R_{0w|\bar{w}}>1,$  $R_{0w}>1,$ $R_{0\bar{w}}>1$ \newline and ${\mu }_{\bar{w}}<\rho_{_2} {\mu }_w.$ \\ \hline 
\end{tabular}
\label{Table2}
\end{center}
\end{table}
\subsection{\textbf{\textit{Wolbachia} invasive model with $\mathbf{\rho_{_1}\in (0,1]}$ and $\sigma>0$}}
When $\rho_{_1} \in (0, 1]$ and $\sigma>0$, the \textit{Wolbachia} invasive model \eqref{Eq7} - \eqref{Eq10} has three steady state solutions: $P_1=(0,0,0,0)$, $P_2=\left(Q^*_{\bar{w}},0,\ F^*_{\bar{w}},\ 0\right),\ $where $Q^*_{\bar{w}}\boldsymbol{=}K\left(1-\frac{1}{R_{0\bar{w}}}\right),\ F^*_{\bar{w}}\boldsymbol{=}\frac{\psi K}{2{\mu }_{\bar{w}}}\left(1-\frac{1}{R_{0\bar{w}}}\right),$ and $P_3=\left(Q^*_{\bar{w}},Q^*_w,\ F^*_{\bar{w}},F^*_w\right)$ with the expression for terms in $P_3$ defined later. Here, we do not have the WI-only mosquito equilibrium point because of the per capita loss of \textit{Wolbachia} infection rate ($\sigma$) that always replenishes the WU population.

The adjusted invasive reproductive number is:
\begin{equation} \label{Eq50a} 
R^1_{0w|\bar{w}}=\frac{R_{0w}\left(1-{\rho }_2\right){\mu }_w}{R_{0\bar{w}}({\mu }_w+\sigma )}. 
\end{equation} 
As expected, the expression for the invasive reproductive number shows that the loss of \textit{Wolbachia} infection reduces $R_{0w|\bar{w}}$ and in turn, the rate at which the WI mosquitoes invade the WU population. The imperfect maternal transmission between adult WI mosquitoes does not affect the adjusted reproductive number.

First, we investigate the impact of the individual reproduction numbers on the dynamics of the general \textit{Wolbachia} invasive model \eqref{Eq7} - \eqref{Eq10}. As before, when both $R_{0w}$ and $R_{0\bar{w}}$ are less than one the solutions tend to the no-mosquitoes equilibrium point (Figure \ref{Fig6}a). Similar to subsection \eqref{sub1}, the Jacobian for the steady-state solution $P_2$ is
\begin{equation} \label{Eq51b} 
J\left(E_2\right)=\left( \begin{array}{cccc}
-\left({\mu }_a+\psi \right)R_{0\bar{w}} & \left({\mu }_a+\psi \right)({1-R}_{0\bar{w}}) & \frac{{\phi }_{\bar{w}}}{R_{0\bar{w}}\ } & \frac{\left({{\rho }_2{\phi }_w-\phi }_{\bar{w}}\right)}{R_{0\bar{w}}} \\ 
0 & -\left({\mu }_a+\psi \right) & 0 & \frac{{\phi }_w\left(1-{\rho }_2\right)}{R_{0\bar{w}}} \\ 
\frac{\psi }{2} & 0 & -{\mu }_{\bar{w}\ \ \ \ } & \sigma  \\ 
0 & \frac{\psi }{2} & 0 & -({\mu }_w+\sigma ) \end{array}
\right), 
\end{equation} 
and the point is locally asymptotically stable whenever $R^1_{0w|\bar{w}}<1$ and $R_{0\bar{w}}>1$. We state the following theorem to show the conditions for global stability of $P_2$
\begin{theorem} \label{th2}
Whenever $\sigma +{\rho }_2{\mu }_w\ge {\mu }_{\bar{w}}\left(1-\ {\rho }_1\right)$,  the WU mosquito-only equilibrium point ($P_2$) is the only non-trivial equilibrium point whenever $R^1_{0w|\bar{w}}<1\ \mathrm{and}\ R_{0\bar{w}}>1$.
\end{theorem}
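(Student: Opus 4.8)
The plan is to prove the statement by a direct equilibrium count: I will show that, under the hypothesis $\sigma + \rho_2\mu_w \ge \mu_{\bar{w}}(1-\rho_1)$ together with $R^1_{0w|\bar{w}}<1$ and $R_{0\bar{w}}>1$, the only steady states of \eqref{Eq7}--\eqref{Eq10} with $\rho_1\in(0,1]$ and $\sigma>0$ are the trivial one and $P_2$. First I would exploit the two linear adult equations: setting the right-hand sides of \eqref{Eq9} and \eqref{Eq10} to zero gives $Q_w=\frac{2(\mu_w+\sigma)}{\psi}F_w$ and $Q_{\bar{w}}=\frac{2}{\psi}\bigl(\mu_{\bar{w}}F_{\bar{w}}-\sigma F_w\bigr)$ at any equilibrium. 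This already rules out a WI-only branch: if $F_{\bar{w}}=0$ then \eqref{Eq9} forces $\tfrac{\psi}{2}Q_{\bar{w}}+\sigma F_w=0$ with both summands non-negative, so $F_w=0$ (using $\sigma>0$) and hence $Q_w=0$; symmetrically $F_w=0$ forces $Q_w=0$. So every non-trivial equilibrium is either of the form $(Q^*_{\bar{w}},0,F^*_{\bar{w}},0)$, which is exactly $P_2$ and exists because $R_{0\bar{w}}>1$, or is interior with $F_{\bar{w}},F_w>0$.

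The core of the argument is to exclude the interior case. Writing $x=F_{\bar{w}}/F_w>0$ and inserting the above relations into the equilibrium forms of \eqref{Eq7} and \eqref{Eq8}, I divide each equation by $F_w$ and then divide the \eqref{Eq7}-equation by the \eqref{Eq8}-equation; the common factors $F_w$ and $\left(1-\tfrac{Q}{K}\right)$ cancel (the latter is nonzero at a positive equilibrium, since $1-Q/K=0$ in the equilibrium form of \eqref{Eq8} would force $Q_w=0$, contradicting $F_w>0$). After clearing denominators this reduces to a single quadratic in $x$,
\[
\alpha\,x^2+\beta\,x+\gamma=0,\qquad
\alpha=\frac{\phi_{\bar{w}}}{\phi_w}(\mu_w+\sigma)\bigl(1-R^1_{0w|\bar{w}}\bigr),\quad
\beta=\sigma+\rho_2\mu_w-\mu_{\bar{w}}(1-\rho_1),\quad
\gamma=\rho_1\mu_w+\sigma,
\]
where I have used $R_{0w}/R_{0\bar{w}}=\phi_w\mu_{\bar{w}}/(\phi_{\bar{w}}\mu_w)$ and the definition \eqref{Eq50a} to write $\alpha$ in terms of $R^1_{0w|\bar{w}}$. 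Now $R^1_{0w|\bar{w}}<1$ gives $\alpha>0$, the hypothesis gives $\beta\ge0$, and $\sigma>0$ gives $\gamma>0$; hence $\alpha x^2+\beta x+\gamma>0$ for every $x>0$, so the quadratic has no positive root and no interior equilibrium exists. Combined with the first step, $P_2$ is the unique non-trivial equilibrium, which is what the theorem asserts.

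I expect the only real difficulty to be the algebraic bookkeeping in the cross-multiplication step: one has to expand $(\mu_w+\sigma)\bigl(\phi_{\bar{w}}x^2+\rho_1\phi_w+\rho_2\phi_w x\bigr)=\phi_w(\mu_{\bar{w}}x-\sigma)\bigl((1-\rho_1)+(1-\rho_2)x\bigr)$, collect like powers of $x$, and recognise that the linear coefficient collapses to exactly the quantity $\sigma+\rho_2\mu_w-\mu_{\bar{w}}(1-\rho_1)$ appearing in the hypothesis while the leading coefficient factors through $1-R^1_{0w|\bar{w}}$. Everything else — the reduction of the adult equations, the cancellation of $\left(1-Q/K\right)$, and the sign reading of the quadratic — is routine. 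If a global-stability conclusion were also wanted, one would couple this uniqueness with the local asymptotic stability of $P_2$ noted earlier and a uniform-persistence or LaSalle argument, but for the stated claim the equilibrium count suffices.
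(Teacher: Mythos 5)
Your proposal is correct and follows essentially the same route as the paper: reduce the interior-equilibrium condition to the quadratic (the paper's equation \eqref{Eq55}, which your $\alpha x^2+\beta x+\gamma$ reproduces after dividing by $\phi_w F_w^2$ with $x=F_{\bar{w}}/F_w$) and read off the signs of the coefficients under $\sigma+\rho_2\mu_w\ge\mu_{\bar{w}}(1-\rho_1)$ and $R^1_{0w|\bar{w}}<1$ to exclude a positive root. Your version is, if anything, slightly tidier in two places — you argue strict positivity of the quadratic on $x>0$ rather than the paper's looser "negative solution for $F^*_w$", and you fold the exclusion of a WI-only equilibrium (which the paper only remarks on before the theorem) explicitly into the proof.
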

\begin{proof}
For the equilibrium point $P_3$, we have;
\begin{equation} \label{Eq52b} 
Q^*_w=\frac{2\left(\sigma +{\mu }_w\right)F^*_w}{\psi }, 
\end{equation} 
\begin{equation} \label{Eq53} 
Q^*_{\bar{w}}=\frac{2\left({\mu }_{\bar{w}}F^*_{\bar{w}}-\sigma F^*_w\right)}{\psi },
\end{equation} 
and using Equation \eqref{Eq7} and \eqref{Eq8}, we have
\begin{equation} \label{Eq54} 
\frac{{\phi }_{\bar{w}}F^{*2}_{\bar{w}}+{{{\rho }_1\phi }_wF}^{*2}_w+{\rho }_2{\phi }_wF^*_wF^*_{\bar{w}}}{{{{(1-\rho }_1)\phi }_wF}^{*2}_w+{(1-\rho }_2){\phi }_wF^*_wF^*_{\bar{w}}}=\frac{Q^*_{\bar{w}}}{Q^*_w}\  
\end{equation} 
From \eqref{Eq52} and \eqref{Eq53}, equation \eqref{Eq54} becomes
\begin{equation} \label{Eq55} 
B_1F^{*2}_w+B_2F^*_{\bar{w}}F^*_w+B_3F^{*2}_{\bar{w}}=0,                                                        
\end{equation} 
where 
\begin{align}\nonumber
B_1&={\phi }_w\left(\sigma +{\rho }_1{\mu }_w\right),\\\nonumber
B_2&=\left(\sigma +{\rho }_2{\mu }_w-{\mu }_{\bar{w}}\left(1-\ {\rho }_1\right)\right){\phi }_w, \\\nonumber
B_3&=\left(1-R^1_{0w|\bar{w}}\right)({\mu }_w+\sigma ){\phi }_{\bar{w}}.
\end{align}
Thus if $\sigma +{\rho }_2{\mu }_w\ge {\mu }_{\bar{w}}\left(1-\ {\rho }_1\right)$ and $R^1_{0w|\bar{w}}>1$, the quadratic equation \eqref{Eq55} has a positive solution: $F^*_w=C_1F^*_{\bar{w}}$ , with 
\begin{equation}\label{Eq56}
C_1=\frac{-B_2+\sqrt{B^2_2-4B_1B_3}}{2B_1}.
\end{equation}
and 
\begin{equation}\label{Eq55a}
F_{\bar{w}}^* = \frac{\psi K}{2(C_1 \mu_w +\mu_{\bar{w}})}\left( 1-\frac{2(1+C_{1})(\mu_{a}+\psi)(\mu_{\bar{w}}-\sigma C_{1})}{\psi(\phi_{\bar{w}}+\phi_{w}C_{1}(\rho_{1}+\rho_{2}))}\right).
\end{equation}
Hence, when $\sigma +{\rho }_2{\mu }_w\ge {\mu }_{\bar{w}}\left(1-\ {\rho }_1\right)$ and $R^1_{0w|\bar{w}} < 1$ we get a negative solution for$\ F^*_w$. Since our general model is biological meaningful (see \ref{ap1}), $P_3$ cannot exist for these conditions. That leaves $P_2$ has the only non-trivial equilibrium point.
\end{proof}
\begin{figure}[h]
\begin{center}
\includegraphics[width=1.0\linewidth]{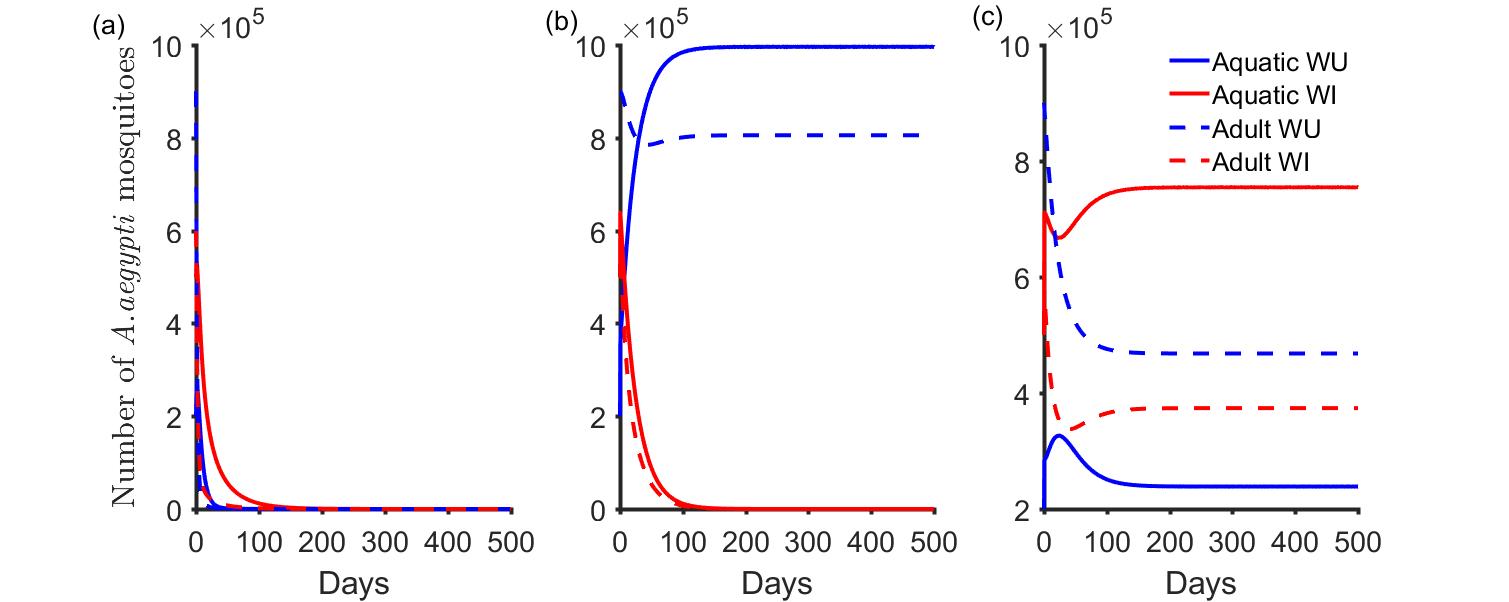}
\caption{\textbf{Simulation of the general \textit{Wolbachia} invasive model \eqref{Eq7} -\eqref{Eq10} for different steady state solutions.} \textbf{(a)} Here, we set $R_{0w}=0.86$, $R_{0\bar{w}}=0.58$, $\rho_{_1}=0.01$ and $\sigma =0.04$. \textbf{(b)} The general reproduction number ($R^1_{0w|\bar{w}}$) is $0.51$, $R_{0w}= 281.8$, $R_{0\bar{w}}=252.76$, $\rho_{_1}=0.05$ and $\sigma =0.07$.\textbf{(c)} $R^1_{0w|\bar{w}}>1$, $R_{0w}= 281.76$, $R_{0\bar{w}}=101.13$, $\rho_{_1}=0.04$ and $\sigma =0.05$.}
\label{Fig6}
\end{center}
\end{figure}

Figure \ref{Fig6}b shows the numerical demonstration of theorem \ref{th2}. The condition  $\sigma +{\rho }_2{\mu }_w\ge {\mu }_{\bar{w}}\left(1-\ {\rho }_1\right)$  can be interpreted as a condition that determines the mosquito fitness advantage of WU mosquitoes over WI mosquitoes. If $\sigma +{\rho }_2{\mu }_w<{\mu }_{\bar{w}}\left(1-\ {\rho }_1\right)$, either of the following two conditions guarantees a positive solution for$\ F^*_w$;
\begin{enumerate}
\item  $1-\frac{{\mu }_w{\left(\sigma +{\rho }_2{\mu }_w-{\mu }_{\bar{w}}\left(1-\ {\rho }_1\right)\right)}^2R_{0w}}{4{\mu }_{\bar{w}}\left({\mu }_w+\sigma \right)(\sigma +{\rho }_1{\mu }_w)R_{0\bar{w}}}<R^1_{0w|\bar{w}}<1$ 
\item  $R^1_{0w|\bar{w}}>1$,
\end{enumerate}
and if $R^1_{0w|\bar{w}}>1$ and $R_{0\bar{w}}>1$  leave $P_3$ as the only possible stable point in the positive quadrant $\mathbb{R}_{+}^4$. If this point is locally asymptotically stable and no other solutions exist in the plane then it is globally asymptotically stable for any positive initial condition (see Poincar\'{e}-Bendixson Trichotomy theorem and Figure \ref{Fig6}c).  The condition $\sigma +\rho_{_2}\mu_w < \mu_{\bar{w}}\left(1- \rho_{_1}\right)$ is the condition for backward bifurcation (Figure \ref{Fig7}a) and it implies that $P_3$ has two equilibria points with the one with higher $F^*_w$ locally asymptotically stable (Figure \ref{Fig7}b and c) with $R_{0}^* = \frac{\mu_{w}\left(\sigma +\rho _{_2}\mu_{w} - \mu_{\bar{w}}\left(1-\rho_{_1}\right)\right)^2 R_{0w}}{4\mu_{\bar{w}}\left(\mu_{w} +\sigma \right)(\sigma +\rho_{_1}\mu _w)R_{0\bar{w}}}$. The derivation of the conditions for local stability of the point $P_3$ when $\sigma +\rho_{_2}\mu_w < \mu_{\bar{w}}\left(1- \rho _{_1}\right)$ is shown in appendix \ref{ap3}.

\begin{figure}[h]
\begin{center}
\includegraphics[width=1.0\linewidth]{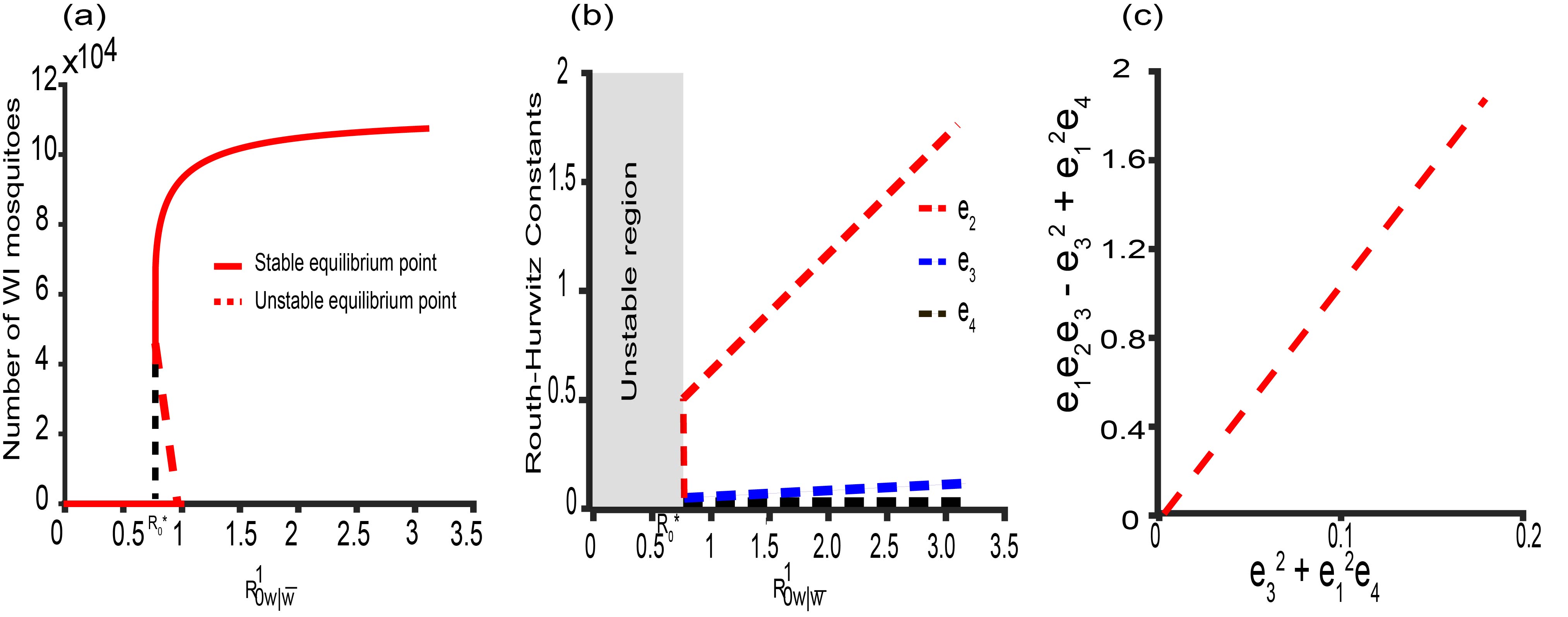}
\caption{\textbf{The backward bifurcation and local stability of the steady-state solution $P_3$}. We vary $\phi_{w}$ and set  values the following key parameters as: $R_{0\bar{w}}=122.5$, $\rho_{_1}=0.1$, $\rho_{_2}=0.06$ and $\sigma =0.02$. (a) Shows the backward bifurcation of the general model association with equilibrium points $P_2$ and $P_3$. (b) and (c) establish the local stability of the equlibrium point $E_4$ using Routh-Hurwitz conditions - (b), $e_1>0$, $e_2>0$, $e_3>0$, $e_4>0$ and (c),$e_1 e_2 e_3 > e_3^2+e_1^2 e^4$.}
\label{Fig7}
\end{center}
\end{figure}
\begin{figure}[h]
\begin{center}
\includegraphics[width=0.9\linewidth]{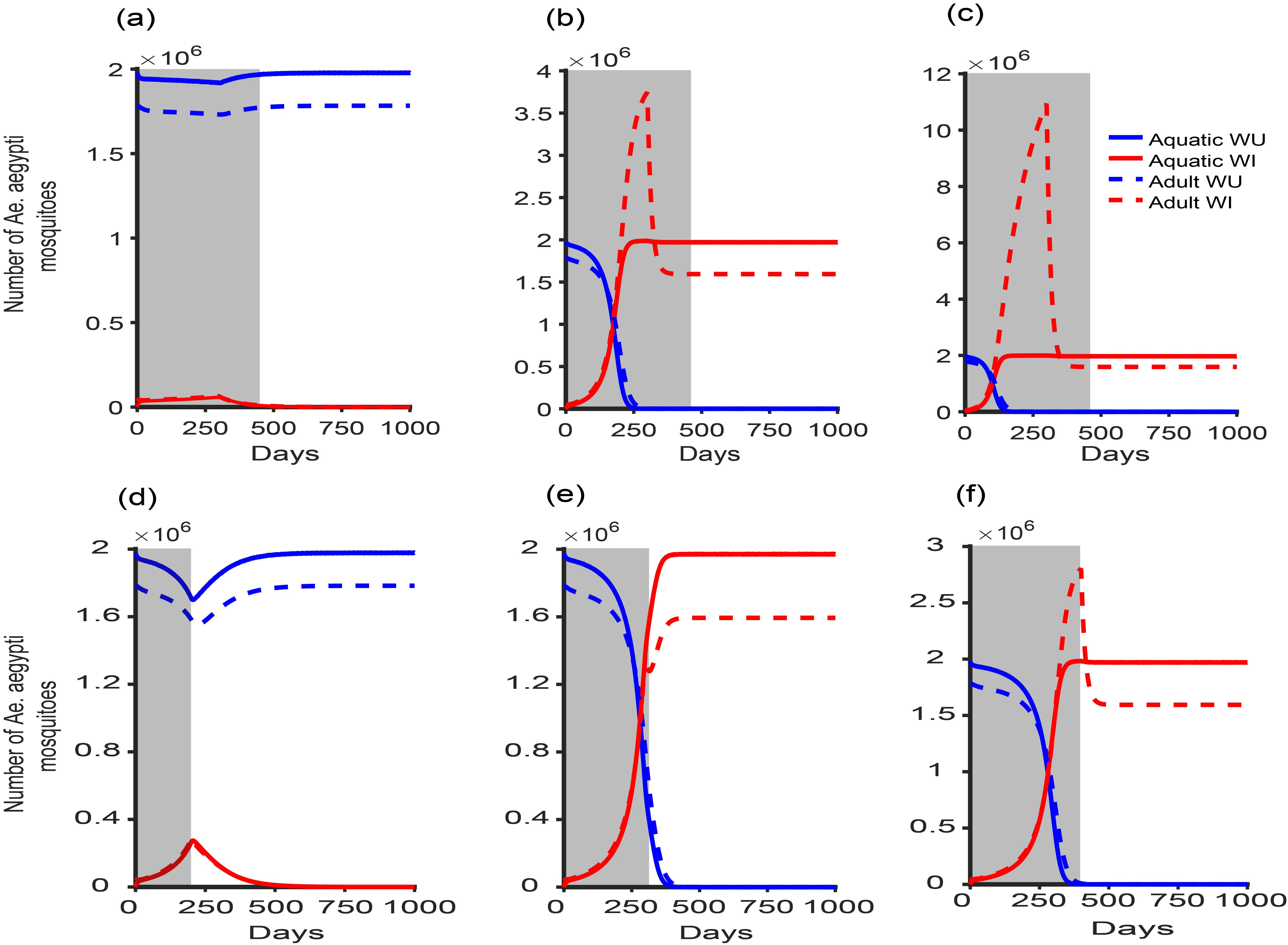}
\caption{\textbf{The \textit{Wolbachia} constant release rate program for different $\delta$ and time frames}. With final time of 480 days as shown by the grey area, (a) $\delta = 0.02$, (b) $\delta = 0.04$, and (c) $\delta = 0.06$. Setting $\delta = 0.03$, we varied the final time (grey area) for (d) $t = 200$ days, (e) $t = 300$ days, and (f) $t = 400$ days. The other parameter values are as in Table \ref{Table1}. Note that $Q_w(0) = 0$, $F_w(0) = \delta F_{\bar{w}}(0),$ and assuming that WU mosquitoes are in the WU-only equilibrium point at $t = 0$.}
\label{Fig8}
\end{center}
\end{figure}
\section{Optimal \textit{Wolbachia} release problem}
Similar to \cite{campo2018optimal}, we considered two release strategies: constant release rate and variable release rate. In \cite{campo2018optimal}, the focus is to wipeout WU mosquitoes and Rafikov et al. \cite{Rafikov2019} strategy is to have more WI mosquitoes than WU mosquitoes. Our modelling work has shown that the focus could be different depending on the dynamics of the \textit{Wolbachia} strain in the \textit{Ae. aegypti} population.
\subsection{Release strategy when $\rho_1 = 0$ and $\sigma =0$}\label{sub12}
For this case, the obvious strategy is to replace the WU mosquitoes with the infected ones. The parameter sets to ensure the conditions for the existence of the co-existence equilibrium in this case are unrealistic and it will be very difficult to achieve as there are  limitations on the parameters that can be controlled. Adjusting equation \eqref{Eq10} for the constant release rate strategy, we have
\begin{equation}\label{Eq62}
\frac{dF_w}{dt}=\frac{\psi }{2}Q_w -(\mu _w - \delta)F_w, 
\end{equation}
where $\delta$ is the per capita release rate. We need to force $R_{0w|\bar{w}}>1$ and $\mu_{\bar{w}}>\rho_2 (\mu_w -\delta)$. With this, the only locally asymptotically stable point is WI-only mosquito point. The two prior conditions indicate that :
\begin{equation}\label{Eq63}
\max\left(\mu_w - \frac{\mu_{\bar{w}}}{\rho_2},\mu_w - \frac{\phi_w\mu_{\bar{w}}(1-\rho_2)}{\phi_{\bar{w}}}\right)<\delta \le \delta_{max}
\end{equation}
Here, we set $\delta_{max} = 0.068$, the value of $\mu_w$ in Table \ref{Table1}. However, it can be large as desired depending on the resource constraint.
Using the values in Table \ref{Table1}, we have $ 0.019\le\delta \le 0.068$. This bound on $\delta$ is novel as we know the allowable range to acheive our goal. Figure \eqref{Fig8} shows the application of the constant release rate program for different rates for 480 days (same as the release program in Townsville, Australia \cite{o2018scaled}) and fixing $\delta = 0.03$ for different time frames. In Figure (\ref{Fig8}a), the goal is not achieved as WI mosquitoes are immediately wipeout and for Figure (\ref{Fig8}d, e, and f), it takes time for  WI mosquitoes to establish themselves.

For the variable release rate, we follow similar approach as \cite{campo2018optimal} by introducing control variable $u(t) \in [\max\left(\mu_w - \frac{\mu_{\bar{w}}}{\rho_2},\mu_w - \frac{\phi_w\mu_{\bar{w}}(1-\rho_2)}{\phi_{\bar{w}}}\right), \mu_w ]$. Hence, the variable release problem is an optimal control problem with constraint on the both end points: 
\begin{equation}\label{Eq64}
   \minimise \,\, J(u) =\int_{0}^{t_f}(c_1u(t)F_w(t)+c_2u^2(t))dt
\end{equation}
subject to equations \eqref{Eq7} - \eqref{Eq9}, and 
\begin{equation}\label{Eq65}
\frac{dF_w}{dt}=\frac{\psi }{2}Q_w -(\mu _w - u(t))F_w, 
\end{equation} 
\begin{equation}\label{Eq66}
   F_{\bar{w}}(t_f) = 0
\end{equation} 
\begin{equation}\label{Eq67}
   F_{w}(0) = u(0) F_{\bar{w}}(0)
\end{equation}
 \begin{figure}[h]
\begin{center}
\includegraphics[width=1.0\linewidth]{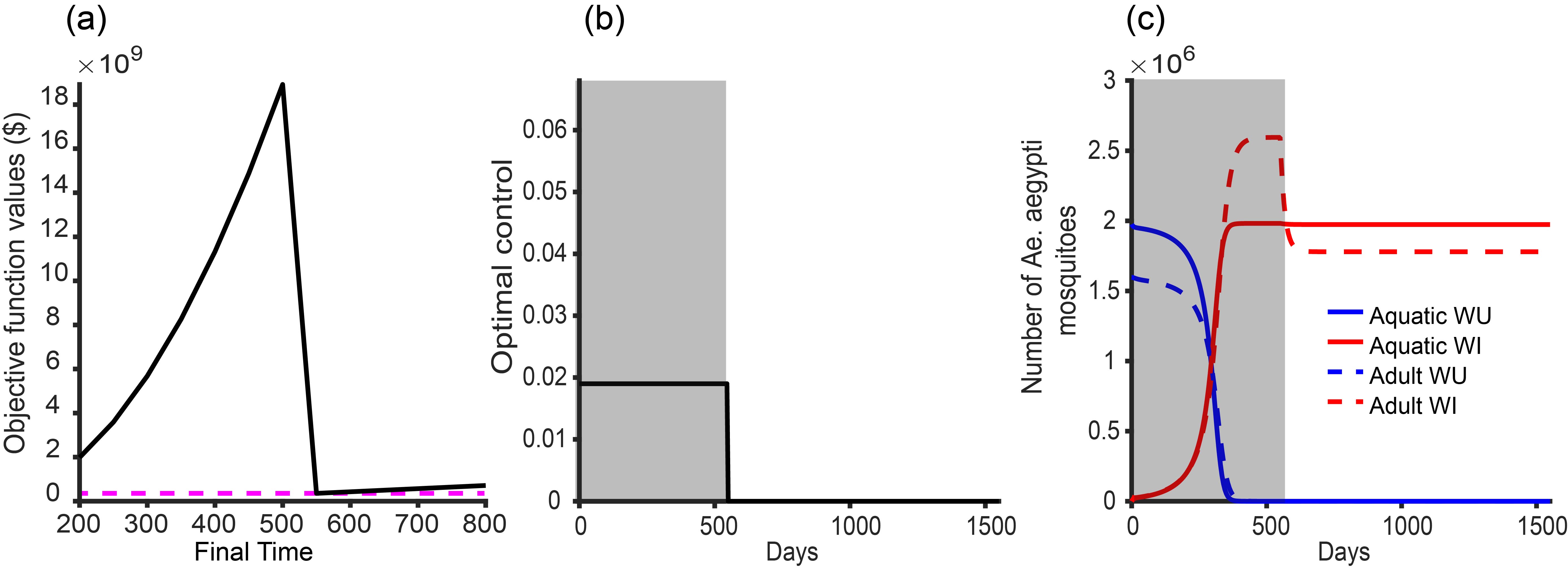}
\caption{\textbf{The \textit{Wolbachia} optimal variable rate release program}. (a) Objective function value for different final times. The minimal cost is at the final time, $t_f =550$, as indicated by the pink line, (b) The optimal control rate is set at the minimum value, and (c) The outcome of this program for  $t_f =550$. The novelty of the algorithm used in this computation is that $F_{\bar{w}}(t_f)$ need not to be zero at $t_f$ for successful replacement of WU mosquitoes. The parameter values used are in Table \ref{Table1} with WU mosquitoes at the WU-only equilibrium point at $t = 0$.The grey line in (c) is the release program period}
\label{Fig9}
\end{center}
\end{figure}
 $c_1u(t)F_w(t), (c_1 >0)$ is the cost per unit time associated wih this strategy and $B_2$ is a balancing cost. The quadratic term is to ensure we  have a regular optimal control. With the terminal constraint \eqref{Eq66}, the global stability of the WI-mosquito only equilibrium point (Theorem \ref{th1}) only requires $R_{0w}>1$. Following the release program of \textit{wMel} strain of \textit{Wolbachia} in Townsville, Australia \cite{o2018scaled}, stage 1 requires 14 months ($\approx$ 480 days) and it cost $\$69,732$ per km$^2$. Additionally, there is a cost associated with staff involved in the program. Here, we set $c_1 = \$69,732$ and $c_2 = 1$.
 and solve the optimal control problem \eqref{Eq64}-\eqref{Eq66} using the function space conjugate gradient algorithm \cite{edge1976function}. See \ref{ap4} for the optimal characterization and algorithm for this problem. Since $t_f$ is unknown, we solve the optimal control problem \eqref{Eq64}-\eqref{Eq67} for $t_f = [200, 800]$ with a step of 50 days and select the minimum objective function value. This approach is a modification of the  algorithm described in chapter 7, section 7.4  of \cite{bryson1975applied} for solving an optimal control problem with an unspecified final time. Figure \ref{Fig9} shows the optimal solutions for the variable release strategy. For this strategy, we do not need to wipe out all the WU mosquitoes to achieve our aim. 

\subsection{Release strategy when $\rho_1 \in (0, 1] $ and $\sigma > 0$}
For this case, we can only have a mix population and we will want more WI mosquitoes. Hence, equation \eqref{Eq10} becomes
\begin{equation}\label{Eq68}
\frac{dF_w}{dt}=\frac{\psi }{2}Q_w -(\mu _w + \sigma - \delta )F_w, 
\end{equation}
for the constant release rate strategy. Since we want more WI mosquitoes, $ C_1 > 1$, $\mu_{\bar{w}}(1-\rho_{1})> \sigma+\rho_2(\mu_w -\delta)$ and $R^{1}_{0w|\bar{w}}>1$. Thus, $ \mu_w + \sigma- \frac{\phi_w\mu_{\bar{w}}(1-\rho_2)}{\phi_{\bar{w}}}<\delta < \delta_{max} $. Again, we set $\delta_{max}= 0.1$ and using values in Table \ref{Table1} with $\sigma = 0.02$ gives  $ 0.039\le \delta \le  0.1$. We varied $\delta$ and implement this release program for two years.  From Figure \ref{Fig10}, after stoping the release program, it does not take much time for the WU mosquitoes to regain their dominance. If this strategy is to be adopted it will require continous application until all arboviral infections are eliminated. 
\begin{figure}[h]
\begin{center}
\includegraphics[width=1.0\linewidth]{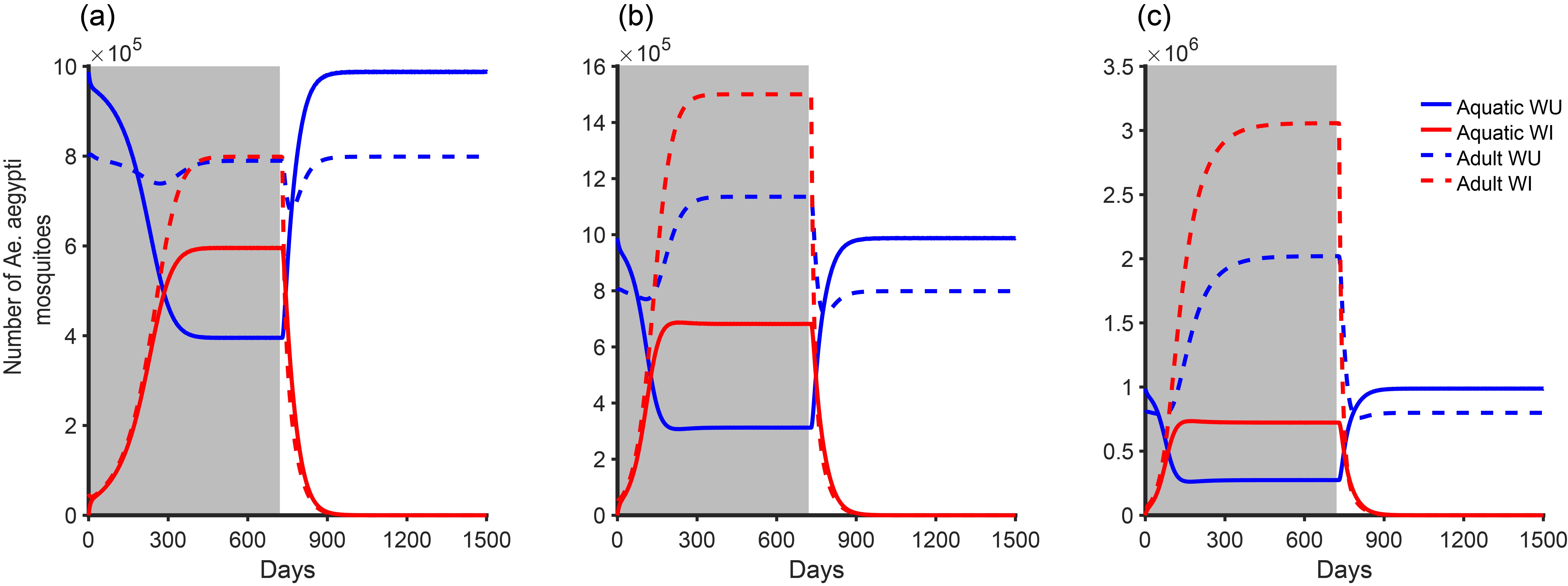}
\caption{\textbf{The \textit{Wolbachia} constant release program for different $\delta$ when $\rho_1 \in (0, 1] $ and $\sigma > 0$}. (a) $\delta = 0.06$, (b) $\delta = 0.076$ and  (c) $\delta = 0.1$. The other parameter values are as in Table \ref{Table1} and the grey area showing the period of release.}
\label{Fig10}
\end{center}
\end{figure}

For the optimal variable release rate strategy, we want at a particular time that the WI mosquitoes will be greater than the WU mosquitoes and maintains it. From Figure \ref{Fig10}, it shows that this will be an ongoing process and that we can make WI mosquitoes more abundant as we want (Figure \ref{Fig10}c) depending on the allowable budget. Hence, since the release program will be ongoing, the benefit of this program is well captured when we incorporate the human infection dynamics to see the level of reduction in arboviral infections as a result of this scheme. This is out of scope of this manuscript.

\section{Discussion and conclusion}
In this study, we developed and analysed a complex dynamical system of a two-type-mosquito population in the presence of imperfect maternal transmission and loss of \textit{Wolbachia} infection in order to determine the necessary and sufficient conditions for the propagation of \textit{Wolbachia} infection in an \textit{Ae. aegypti} population. We derived the invasive reproductive numbers with or without the adult WI mosquitoes losing  their \textit{Wolbachia} infection and established the conditions for local and global stability of the equilibrium points. We further adopted our models to determine the optimal release program that will ensure that WI mosquitoes replace or are become more abundant than the WU ones. Our analyses showed that mosquitoes with \textit{Wolbachia} infection can dominate, co-exist or die out depending on whether they are fitter than mosquitoes without \textit{Wolbachia} infection. The results showed clearly which factors and conditions are necessary and sufficient for WI mosquitoes to persist.

From our modelling, and consistent with other modelling works \cite{ndii2012modelling, xue2017two, ferguson2015modeling}, if WI mosquitoes are introduced in small numbers, the WI mosquitoes will not disrupt and outbreed WU ones. However, continuous introduction of WI mosquitoes for a particular period of time \cite{o2018scaled} will rescale the invasive reproduction number and increase it to above one where the WI-only mosquito equilibrium point is globally stable. A model in \cite{xue2017two} derived the conditions for WI mosquitoes to invade the \textit{Ae. aegypti} population and showed that the \textit{Wolbachia} can still spread despite the fact that the number of \textit{Wolbachia}-infected offspring due to WI adult mosquitoes in the next generation was less than one. By assuming an equal sex ratio between male and female \textit{Ae. aegypti} mosquitoes, imperfect maternal transmission and loss of \textit{Wolbachia} infection, we were able to find this threshold explicitly. This threshold does not guarantee replacement of uninfected mosquitoes as there are two possible steady-state solutions with one stable and the other unstable. However, \textit{Wolbachia}-infected mosquitoes can dominate if the defined invasive reproductive number is greater than one (this is not possible realistically except external measure such as deliberate introduction of WI mosquitoes is implemented) and other conditions stated in our results are satisfied. Then it is possible to completely replace the WU mosquitoes or have them in less proportion depending on the \textit{Wolbachia} infection dynamics in the \textit{Ae. aegypti} population as shown by the optimal control problem. 

Our study has some limitations that may affect our conclusions. One, we assumed that the ratio of male to female \textit{Ae. aegypti} mosquitoes is the same. This has been shown under a laboratory study and may not be necessarily true in a real-life situation \cite{arrivillaga2004}. Whatever the ratio of males to females in the number of eggs laid by either WU or infected female \textit{Ae. aegypti} mosquitoes, the main factors that determines \textit{Wolbachia} take-over is the proportion of WI eggs in the next generation and death rate. Two, most of the parameters in the associated invasive reproductive number are seasonally dependent \cite{ndii2016effect, yang2009assessing}. The dependency of key parameters on temperature is likely to affect WU and infected mosquitoes in a similar way. This means that we are likely to have \textit{Ae. aegypti} populations with any of three possibilities: without \textit{Wolbachia}; with \textit{Wolbachia}; and co-existence. Adverse conditions for the female \textit{Ae. aegypti} mosquitoes are likely to reduce the ability to reproduce and fertilize their eggs, and this is likely to push the population towards the no-mosquito equilibrium point rather than changing the proportion of the \textit{Ae. aegypti} population with \textit{Wolbachia} infection. Lastly, we have mimiced the transmission dynamics of the \textit{wMel} strain of \textit{Wolbachia} in this work. Other strains such as \textit{wMelPop} and \textit{wAlbB} have similar dynamics to the \textit{wMel} strain but with some variabilty in their reproductive advantage that affects the rate of introduction to ensure replacement \cite{xue2018comparing}. However, the \textit{wAu} strain does not have the advantage of CI but comes with a high virus transmision blocking potential \cite{ant2018wolbachia}. For the \textit{wAu} strain, our modelling is not applicable as the lack of CI implies that this strain has to be combined with another \textit{Wolbachia} strain that has the advantage of CI and superinfection \cite{ant2018wolbachia}. Further, the experimental modelling work by Ferguson et al \cite{ferguson2015modeling} showed different transmission settings that different strains of \textit{Wolbachia} can be adopted to reduce dengue infections. However, with the evident of loss of cytoplasmic incompatibility under field conditions \cite{ross2019loss} this needs to be revisited and our modelling work can be a template.

In general, our modelling work in this study complements existing works \cite{ndii2012modelling,xue2017two, Rafikov2019, campo2018optimal, Qu2018} and bridges the gap between alternative ways that WU mosquitoes may have advantages over the WI infected ones. Hence, controlling dengue epidemics and other arboviral infections with \textit{Wolbachia} is promising but implementing the strategy comes at a cost that requires careful evaluation. We have shown the potential outcomes of implementing such a strategy and the key parameters that could be targeted to achieve the desired objectives. Another question is what level of other vector control method is needed to aid \textit{Wolbachia} propagation so progress made will not be lost. If \textit{Wolbachia} has a strong and sustained effect in the \textit{Ae. aegypti} population, it remains to be seen whether the \textit{Ae. albopictus} will take over as a key vector agent for viral transmission or whether climate change can negate all gains from such an introduction. These are questions for future research.

%% The Appendices part is started with the command \appendix;
%% appendix sections are then done as normal sections
\appendix
\section{Positivity and boundness of solutions}\label{ap1}
\begin{theorem}
For any given non-negative initial conditions, the solutions of \textit{Wolbachia} invasive model with $\rho_1 = 0$ and $\sigma = 0$ are non-negative for all $t\ge 0$ and bounded.
\end{theorem}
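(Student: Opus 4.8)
The plan is to treat the statement in three stages: well-posedness, forward invariance of the non-negative orthant, and a uniform a priori bound. Throughout, $\rho_1=0$ and $\sigma=0$, so the system is \eqref{Eq7}--\eqref{Eq10} with those values; write $Q=Q_{\bar w}+Q_w$, $F=F_{\bar w}+F_w$, $\phi_{\max}=\max(\phi_{\bar w},\phi_w)$ and $\mu_{\min}=\min(\mu_{\bar w},\mu_w)$. The first point I would settle is that the vector field is well defined and locally Lipschitz on the closed orthant $\mathbb{R}^4_{+}$: the only delicate terms are the birth ratios such as $\dfrac{\phi_{\bar w}F_{\bar w}^2+\rho_2\phi_w F_wF_{\bar w}}{F_{\bar w}+F_w}$, which are $0/0$ at $F_{\bar w}=F_w=0$. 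I would extend each by the value $0$ there and check, by a short direct computation, that on $\{F_{\bar w},F_w\ge 0\}$ such a ratio is bounded above by $\phi_{\max}F$ and has both partial derivatives bounded in modulus by $\phi_{\max}$, hence is globally Lipschitz on the orthant. With a locally Lipschitz field, local existence and uniqueness through any non-negative initial point follow from Picard--Lindel\"of, and global existence on $[0,\infty)$ then follows once the a priori bound is established.

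For non-negativity I would invoke the standard sub-tangentiality (Nagumo-type) criterion: $\mathbb{R}^4_{+}$ is forward invariant provided that on each face $\{x_i=0\}\cap\mathbb{R}^4_{+}$ the $i$-th component of the vector field is non-negative. The four checks are immediate. On $Q_{\bar w}=0$, $\dot Q_{\bar w}=\big[\text{birth ratio}\big]\big(1-Q/K\big)_{+}\ge 0$ since both factors are non-negative; likewise $\dot Q_w\ge 0$ on $Q_w=0$. On $F_{\bar w}=0$, $\dot F_{\bar w}=\tfrac{\psi}{2}Q_{\bar w}\ge 0$; and on $F_w=0$, $\dot F_w=\tfrac{\psi}{2}Q_w\ge 0$. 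Hence every solution issuing from $\mathbb{R}^4_{+}$ remains in $\mathbb{R}^4_{+}$ on its maximal interval of existence.

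For boundedness the key observation is that the aquatic total is controlled by the logistic cut-off. Adding \eqref{Eq7} and \eqref{Eq8} gives $\dot Q=B(F_{\bar w},F_w)\big(1-Q/K\big)_{+}-(\mu_a+\psi)Q$ with $B\ge 0$; whenever $Q\ge K$ the first term vanishes and $\dot Q=-(\mu_a+\psi)Q<0$, so a barrier argument yields $Q(t)\le\max\{K,Q(0)\}=:Q_{\sup}$ for all $t$ (indeed $\limsup_{t\to\infty}Q(t)\le K$). Then, since $Q_{\bar w},Q_w\le Q\le Q_{\sup}$ and $\sigma=0$, equations \eqref{Eq9}--\eqref{Eq10} give $\dot F_{\bar w}\le\tfrac{\psi}{2}Q_{\sup}-\mu_{\bar w}F_{\bar w}$ and $\dot F_w\le\tfrac{\psi}{2}Q_{\sup}-\mu_w F_w$, so by the comparison theorem $F_{\bar w}$ and $F_w$ are bounded, with $\limsup F_{\bar w}\le \psi Q_{\sup}/(2\mu_{\bar w})$ and $\limsup F_w\le \psi Q_{\sup}/(2\mu_w)$. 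All four components are therefore bounded on $[0,\infty)$; this also precludes finite-time blow-up and hence gives global existence. (Nothing beyond $\sigma=0$ is used here except in the last two inequalities, and when $\sigma>0$ one simply adds the bounded term $\sigma F_w$ on the right, so the argument carries over.)

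I expect the only genuine subtlety to be the well-posedness step: confirming that the $0/0$ birth ratios extend to a locally Lipschitz field at the corner $F_{\bar w}=F_w=0$, since the invariance and comparison arguments afterwards are routine. An alternative that sidesteps this is to work on $\{F_{\bar w}+F_w>0\}$ and note that a trajectory can only reach $F_{\bar w}=F_w=0$ with $Q_{\bar w}>0$ or $Q_w>0$ (otherwise it sits at the rest point $E_1$), where $\dot F_{\bar w}$ or $\dot F_w$ is strictly positive and the trajectory leaves immediately; but making that airtight amounts to essentially the same bookkeeping. The explicit constant $Q_{\sup}=\max\{K,Q(0)\}$ and the $\limsup$ estimates above are precisely what the asymptotic (in particular the global-stability) analysis in the main text relies on.
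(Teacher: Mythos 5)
Your proposal is correct and follows essentially the same route as the paper: non-negativity via sign checks of the vector field on the boundary faces of $\mathbb{R}^4_{+}$ (the paper phrases this as a first-crossing-time contradiction, you as Nagumo sub-tangentiality, which is the same argument), followed by the bound $Q\le K$-type barrier and comparison estimates for $F_{\bar w},F_w$ to get boundedness. Your additional verification that the $0/0$ birth ratios extend to a Lipschitz field at the corner $F_{\bar w}=F_w=0$, and your handling of initial data with $Q(0)\ge K$ via $Q_{\sup}=\max\{K,Q(0)\}$, are refinements the paper omits but do not change the underlying approach.
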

\begin{proof}
We prove by contradiction that whenever a solution enters the feasible region  ${\mathbb{R}}^4_+$ , it stays there forever. Consider the following four cases:

\begin{enumerate}
\item  there exists a first time $t_1>0$ such that whenever $Q_{\bar{w}}\left(t_1\right)=0,\ \frac{\ dQ_{\bar{w}}(t_1)}{dt\ }<0,\ Q_w\left(t\right)\ge 0,\ F_{\bar{w}}\left(t\right)\ge 0,\ F_w\left(t\right)\ge 0,\ 0\le t\le t_1$

\item  there exists a first time $t_2>0$ such that whenever $Q_w\left(t_2\right)=0,\ \frac{\ dQ_{\bar{w}}(t_2)}{dt\ }<0,\ Q_{\bar{w}}\left(t\right)\ge 0,\ F_{\bar{w}}\left(t\right)\ge 0,\ F_w\left(t\right)\ge 0,\ 0\le t\le t_2$

\item  there exists a first time $t_3>0$ such that whenever $F_{\bar{w}}\left(t_3\right)=0,\ \frac{\ dF_{\bar{w}}(t_3)}{dt\ }<0,\ Q_{\bar{w}}\left(t\right)\ge 0,\ Q_w\left(t\right)\ge 0,\ F_w\left(t\right)\ge 0,\ 0\le t\le t_3$

\item  there exists a first time $t_4>0$ such that whenever $F_w\left(t_4\right)=0,\ \frac{\ dF_w(t_4)}{dt\ }<0,\ Q_{\bar{w}}\left(t\right)\ge 0,\ Q_w\left(t\right)\ge 0,\ F_{\bar{w}}\left(t\right)\ge 0,\ 0\le t\le t_4$
\end{enumerate}
First, it can be shown that $Q(t)\le K$ provided $Q\left(0\right)<K$. Then, for the first case;
\[\frac{\ dQ_{\bar{w}}(t_1)}{dt\ }=\left[\frac{{\phi }_{\bar{w}}F^2_{\bar{w}}(t_1)+{\rho }_2{\phi }_wF_w\left(t_1\right)F_{\bar{w}}(t_1)}{F_{\bar{w}}(t_1)+F_w(t_1)}\right]\left(1-\frac{Q_w(t_1)}{K}\right)\ge 0,\] 
which contradicts that $\frac{\ dQ_{\bar{w}}(t_1)}{dt\ }<0$. For all the remaining cases, we have; 
\[\frac{dQ_w(t_1)}{dt}=\left[\frac{{{\phi }_WF}^2_w(t_1)+(1-{\rho }_2){\phi }_wF_wF_M(t_1)}{F_{\bar{w}}(t_1)+F_w(t_1)}\right]\left(1-\frac{Q_{\bar{w}}(t_1)}{K}\right)\ge 0,\] 
\[\frac{dF_{\bar{w}}(t_1)}{dt}=\frac{\psi }{2}Q_{\bar{w}}\left(t_1\right)\ge 0,\] 
\[\frac{dF_w(t_1)}{dt}=\frac{\psi }{2}Q_w\left(t_1\right)\ge 0.\] 
Hence, the solutions are non-negative for all future times given non-negative initial data. It remains to show that the solutions are bounded.
\end{proof}

\begin{corollary}
Let $\left(t\right)=Q_{\bar{w}}\left(t\right)+Q_w\left(t\right)+F_{\bar{w}}\left(t\right)+F_w(t)$ , there exists a constant $\tau >0$ such that ${\mathop{\mathrm{lim}\mathrm{}\mathrm{sup}}_{t\to \infty } M\left(t\right)\le \tau }$ . \end{corollary}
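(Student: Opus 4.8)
The plan is to split $M = Q + F$ into the aquatic total $Q := Q_{\bar w}+Q_w$ and the female total $F := F_{\bar w}+F_w$, bound each separately, and then add the two bounds using subadditivity of $\limsup$. The key point is that the messy birth fractions in \eqref{Eq7}–\eqref{Eq8} never need to be estimated: they feed only into $dQ/dt$, and $Q$ is already controlled by the logistic cap.

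First I would invoke the estimate on $Q$ established in the proof of the preceding theorem: since $1-Q/K$ is truncated at $0$ (equivalently, $dQ/dt \le -(\mu_a+\psi)Q<0$ whenever $Q\ge K$), one has $Q(t)\le \max\{Q(0),K\}$ and hence $\limsup_{t\to\infty} Q(t)\le K$. Next I would add \eqref{Eq9} and \eqref{Eq10} to get
\begin{equation}
\frac{dF}{dt}=\frac{\psi}{2}\big(Q_{\bar w}+Q_w\big)-\mu_{\bar w}F_{\bar w}-\mu_w F_w\le \frac{\psi}{2}Q-\mu_{\min}F,\qquad \mu_{\min}:=\min\{\mu_{\bar w},\mu_w\}.
\end{equation}
Using $Q(t)\le K$ (for all $t$ if $Q(0)\le K$, and asymptotically otherwise), this yields the scalar differential inequality $dF/dt\le \tfrac{\psi K}{2}-\mu_{\min}F$. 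By the comparison theorem \cite{smith1995theory}, $F(t)$ is dominated by the solution of $\dot y=\tfrac{\psi K}{2}-\mu_{\min}y$, so
\begin{equation}
\limsup_{t\to\infty} F(t)\le \frac{\psi K}{2\mu_{\min}}.
\end{equation}
Finally, since $\limsup_{t\to\infty}(Q(t)+F(t))\le \limsup_{t\to\infty}Q(t)+\limsup_{t\to\infty}F(t)$, I would conclude
\begin{equation}
\limsup_{t\to\infty} M(t)\le K+\frac{\psi K}{2\mu_{\min}}=:\tau,
\end{equation}
which is the desired constant $\tau>0$.

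I do not expect a genuine obstacle here: the only mild subtlety is justifying $\limsup Q\le K$ without a closed differential inequality for $Q$ alone (the birth terms are bounded only by a constant multiple of $F$, which is not yet known to be bounded), but that is precisely what the standing assumption $1-Q/K\ge 0$ — i.e., the $(1-Q/K)_+$ truncation — provides, and it is already carried out in the preceding theorem. Everything after that is a single application of the comparison principle.
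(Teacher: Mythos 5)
Your proof is correct, but it is organized differently from the paper's. The paper sums all four equations \eqref{Eq7}--\eqref{Eq10} into a single equation for $M$, bounds the birth fraction $\bigl[\phi_{\bar w}F_{\bar w}^2+\phi_w F_w F_{\bar w}+\phi_w F_w^2\bigr]/(F_{\bar w}+F_w)$ by a constant using the pointwise claims $F_{\bar w},F_w\le \psi K/(2\mu_1)$ (themselves deduced from $Q_{\bar w},Q_w< K$ via \eqref{Eq9}--\eqref{Eq10}), and then reads off $\limsup M\le\tau$ from the scalar inequality $dM/dt\le \psi K(\phi_{\bar w}+2\phi_w)/(4\mu_1)-\mu_1 M$ with $\mu_1=\min(\mu_{\bar w},\mu_w,\mu_a)$. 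You instead never form the $M$-inequality: you use the $(1-Q/K)_+$ truncation to get $\limsup Q\le K$ directly, run the comparison theorem on $dF/dt\le \tfrac{\psi}{2}K-\mu_{\min}F$, and add the two limsups. Your route buys two things: the birth fraction never has to be estimated at all, and the bound on $F$ is obtained honestly as an ultimate (asymptotic) bound via comparison, whereas the paper's stated pointwise bound $F_{\bar w}\le \psi K/(2\mu_1)$ is only valid for suitably small initial data or after a transient --- precisely the gap your comparison step closes. The paper's route, in exchange, produces a single dissipative inequality for the full population $M$, which is the more standard presentation for ultimate boundedness and yields an explicit (if different) value of $\tau$; both arguments rest on the same two ingredients, the logistic cap on $Q$ and the linear decay in the adult equations.
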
 
\begin{proof}
Adding equations \eqref{Eq7} to \eqref{Eq10}, we have
\begin{align}\nonumber
\frac{dM}{dt}=&\left[\frac{{\phi }_{\bar{w}}F^2_{\bar{w}}+{\phi }_wF_wF_{\bar{w}}+{\phi }_wF^2_w}{F_{\bar{w}}+F_w}\right]\left(1-\frac{Q}{K}\right)-{\mu }_a{(Q}_{\bar{w}}+Q_w)-\frac{\psi }{2}{(Q}_{\bar{w}}+Q_w)\\\label{ape1}
&-{\mu }_{\bar{w}}F_{\bar{w}}-{\mu }_wF_{w}.
\end{align}                                                           
Since $Q_{\bar{w}}< \ K,\ Q_w< K$, then from equations \eqref{Eq9} and \eqref{Eq10}, $F_{\bar{w}}\le \frac{\psi K}{2{\mu }_1}$ and  $F_w\le \frac{\psi K}{2{\mu }_1}$  , where ${\mu }_1=\mathrm{min}\mathrm{}({{\mu }_{\bar{w}},\mu }_w,{\mu }_a)$. Thus, equation \eqref{ape1} becomes
\[\frac{dM}{dt}\le \frac{\psi K\left({\phi }_U+2{\phi }_w\right)}{{4\mu }_1}-{\mu }_1M.\] 
Hence, it follows from the inequality that there exists a constant $\tau $ such that
\[{\mathop{\mathrm{lim}\mathrm{}\mathrm{sup}}_{t\to \infty } M\left(t\right)\le \tau \ }.\] 
\end{proof}

\section{Jacobian expression of the \textit{Wolbachia} invasive model with $\rho_{_2}=0$ and $\sigma =0$}\label{ap2}
The general Jacobian of the models (7-10) is given as
\begin{equation}\label{ape2} 
J=\left( \begin{array}{cccc}
-F_1  & -T_1 & A_1 & A_2 \\ 
-T_{2} & -F_2 & B_1  & B_2 \\
\frac{\psi }{2} & 0 & -{\mu }_{\bar{w}} & 0\\ 
0 & \frac{\psi }{2} &  0 &-{\mu }_{w}
\end{array}
\right), 
\end{equation} 
where
\begin{equation} \label{ape3} 
T_1=\left(\frac{{{{\phi }_{\bar{w}}F}^2_{\bar{w}}}^*+{\rho \phi }_w{F_{\bar{w}}}^*{F_w}^*}{({F_{\bar{w}}}^*+{F_w}^*)K}\right), 
\end{equation} 
\begin{equation} \label{ape4} 
T_2=\left(\frac{{{{\phi }_wF}^2_w}^*+{(1-\rho )\phi }_w{F_{\bar{w}}}^*{F_w}^*}{({F_{\bar{w}}}^*+{F_w}^*)K}\right), 
\end{equation} 
\begin{equation} \label{ape5} 
A_1=\left(1-\frac{Q^*}{K}\right)\left[\left(\frac{{{{\phi }_{\bar{w}}F}^2_{\bar{w}}}^*+2{\phi }_{\bar{w}}{F_{\bar{w}}}^*{F_w}^*+{\rho \phi }_w{F^2_w}^*}{{{{(F}_{\bar{w}}}^*+{F_w}^*)}^2\ \ }\right)\right]\ge 0, 
\end{equation} 
\begin{equation} \label{ape6} 
A_2=-\left(1-\frac{Q^*}{K}\right)\left[\left(\frac{{F^2_{\bar{w}}}^*\left({\phi }_{\bar{w}}-{\rho \phi }_W\right)}{{{{(F}_{\bar{w}}}^*+{F_w}^*)}^2}\right)\right]\le 0\ ,\mathrm{\ if}\ \ {\phi }_{\bar{w}}>{\rho \phi }_w, 
\end{equation} 
\begin{equation} \label{ape7} 
B_1=-\left(1-\frac{Q^*}{K}\right)\left[\left(\frac{{\rho \phi }_w{F^2_w}^*}{{{{(F}_{\bar{w}}}^*+{F_w}^*)}^2}\right)\right]\le 0, 
\end{equation} 
\begin{equation} \label{ape8} 
B_2=\left(1-\frac{Q^*}{K}\right)\left[\left(\frac{{{{(1-\rho )\phi }_wF}^2_{\bar{w}}}^*+2{\phi }_w{F_{\bar{w}}}^*{F_w}^*+{\phi }_w{F^2_w}^*}{{{{(F}_{\bar{w}}}^*+{F_w}^*)}^2\ \ }\right)\right]\ge 0,                                                               
\end{equation} 
\begin{equation} \label{ape9} 
F_1=\left({\mu }_a+\psi +T_1\right), 
\end{equation} 
\begin{equation} \label{ape10} 
F_2=\left({\mu }_a+\psi +T_2\right). 
\end{equation} 

\section{Local stability of the equilibrium point $P_{3}$ when $\sigma +{\rho }_2{\mu }_w<{\mu }_{\bar{w}}\left(1- {\rho }_1\right)$}\label{ap3}
The Jacobian of the general model \eqref{Eq7}-\eqref{Eq10} is given as:
\begin{equation} \label{ape10} 
J\left(E_4\right)=\left( \begin{array}{cccc}
h_{11}  & h_{12} & h_{13} & h_{14} \\ 
h_{21} & h_{22} & h_{23} & h_{24} \\
\frac{\psi }{2} &  0  &-{\mu }_{\bar{w}} & \sigma\\
0 & \frac{\psi }{2} & 0  & -({\mu }_w+\sigma )
 \end{array}
\right),
\end{equation} 
where
\begin{equation} \label{ape11} 
h_{11}=-{(\mu }_a+\psi )\left[\frac{K-Q^*_w}{K-Q^*}\right] 
\end{equation} 
\begin{equation} \label{ape12} 
h_{12}=-\frac{{(\mu }_a+\psi )Q^*_{\bar{w}}}{K-Q^*} 
\end{equation} 
\begin{equation} \label{ape13} 
h_{13}=\left(1-\frac{Q^*}{K}\right)\left[\left(\frac{{\phi }_{\bar{w}}(1+2C_1)+{\phi }_w\left({\rho }_2-{\rho }_1\right)C^2_1}{{(1+C_1)}^2\ \ }\right)\right] 
\end{equation} 
\begin{equation} \label{ape14} 
h_{14}=\left(1-\frac{Q^*}{K}\right)\left[\left(\frac{{{\rho }_1\phi }_wC_1(C_1+2)+{\rho }_2{\phi }_w-{\phi }_{\bar{w}}}{{(1+C_1)}^2\ \ }\right)\right] 
\end{equation} 
\begin{equation} \label{ape15} 
h_{21}=-\frac{{(\mu }_a+\psi )Q^*_w}{K-Q^*} 
\end{equation} 
\begin{equation} \label{ape16}
h_{22}=-{(\mu }_a+\psi )\left[\frac{K-Q^*_{\bar{w}}}{K-Q^*}\right]
\end{equation}
\begin{equation} \label{ape17}
h_{23}=\left(1-\frac{Q^*}{K}\right)\left[\left(\frac{\left({\rho }_1-{\rho }_2\right){\phi }_wC^2_1}{{(1+C_1)}^2\ \ }\right)\right]
\end{equation}
\begin{equation} \label{ape18} 
h_{24}=\left(1-\frac{Q^*}{K}\right)\left[\left(\frac{{\phi }_w(\left(1-{\rho }_1\right)C_1(C_1+2)+\left(1-{\rho }_2\right))}{{(1+C_1)}^2\ \ }\right)\right].
\end{equation} 
When $F^*_w\ =0$ and $Q^*_w=0$, then the Jacobian \eqref{ape10} becomes the Jacobian expression\eqref{Eq51}. The characteristic equation is
\begin{equation} \label{ape19} 
P\left(\lambda \right):={\lambda }^4+e_1{\lambda }^3+e_2{\lambda }^2+e_3\lambda +e_4=0, 
\end{equation} 
where the coefficients are given by the following expressions:
\begin{align} \label{ape20} 
e_1 &=\sigma +{\mu }_w+{\mu }_{\bar{w}}-h_{11}-h_{22}>0,\\ \label{ape21} 
e_2 &=h_{11}h_{22}-\left(\sigma +{\mu }_w+{\mu }_{\bar{w}}\right)\left(h_{11}+h_{22}\right)-\frac{\psi \left(h_{13}+h_{24}\right)}{2}+{\mu }_{\bar{w}}(\sigma +{\mu }_w)-h_{12}h_{21}\\
\label{ape22} 
e_3 &=\left(\sigma +{\mu }_w+{\mu }_{\bar{w}}\right)\left(h_{11}h_{22}-h_{12}h_{21}\right)-{\mu }_{\bar{w}}\left(\sigma +{\mu }_w\right)\left(h_{11}+h_{22}\right)+\\\nonumber &\frac{\psi \left(h_{24}(h_{11}-{\mu }_{\bar{w}}\right)+h_{13}\left(h_{22}-\sigma -{\mu }_w\right)-h_{23}\left(h_{12}+\sigma \right)-h_{14}h_{21}}{2}\\\label{ape22}
e_4 &= {\mu }_{\bar{w}}\left(\sigma +{\mu }_w\right)\left(h_{11}h_{22}-h_{12}h_{21}\right)+\frac{{\psi }^2\left(h_{13}h_{24}-h_{14}h_{23}\right)}{4}\\ \nonumber
& \frac{\psi \sigma (h_{13}\left(h_{22}-h_{21}\right)+h_{23}\left(h_{11}-h_{12}\right))}{2}+\ \frac{\psi {\mu }_w(h_{13}h_{22}-h_{12}h_{23})}{2}\\\nonumber
&+\frac{\psi {\mu }_{\bar{w}}(h_{11}h_{24}-h_{14}h_{21})}{2}.
\end{align}
As before, we need to show that the coefficients of the characteristics equation are greater than zero and that $e_1e_2e_3>e^2_3+e^2_1e_4\ $for the equilibrium point to be locally asymptotically stable whenever $\sigma +{\rho }_2{\mu }_w<{\mu }_{\bar{w}}\left(1-\ {\rho }_1\right)$ and $R^1_{0w|\bar{w}}>R^*_0$.

\section{Optimal characterization and algorithm}\label{ap4}
 The optimal control problem \eqref{Eq64} - \eqref{Eq67} when $\rho_1 = 0$ and $\sigma=0$ is nonlinear problem with constraint on the initial time for $F_{w}$ and final time for $F_{\bar{w}}$. The algorithm for solving this problem especially when the final time is unspecified can be computational intensive. However, a commercialised package such as GPOPS-II can be used to solve this type of optimal control problem \cite{patterson2014gpops}. Here, instead of solving the problem directly, we solve the augmented problem defined by deriving the augmented cost functional\cite{edge1976function}:
 \begin{equation}\label{ap4_1}
   J(u) = 10(F_{\bar{w}}(t_f)^2 +(F_{w}(0)-u(0)F_{\bar{w}}(0))^2)+ \int_{0}^{t_f}(Cu(t)F_w(t)+u^2(t))dt
 \end{equation}
The value 10 is arbitrary as any value can be used. Hence, if the equality constraint is satified the augmented cost functional becomes the original cost functional. Hence, we have a Bolza problem without constraints on the state variables to solve rather than Lagrange problem with constraint on the state variables. Hence the Hamitonian function is defined as:
 \begin{equation}\label{ap4_2}
   H = Cu(t)F_w(t)+u^2(t) + \lambda_{Q_{\bar{w}}}f_1 + \lambda_{Q_{w}}f_2 +\lambda_{f_{\bar{w}}}F_1 + \lambda_{f_{w}}F_4.
 \end{equation}
Where,
\begin{align*}
f_1 =&\left[\frac{{\phi }_{\bar{w}}F^2_{\bar{w}}+{{\rho_{_1}\phi }_wF}^2_w+\rho_{_2}{\phi }_wF_wF_{\bar{w}}}{F_{\bar{w}}+F_w}\right]\left(1-\frac{Q}{K}\right)-{(\mu }_a+\psi )Q_{\bar{w}},\\
f_2 =&\left[\frac{{{(1-\rho_{_1})\phi }_wF}^2_w+(1-\rho_{_2}){\phi }_wF_wF_{\bar{w}}}{F_{\bar{w}}+F_w}\right]\left(1-\frac{Q}{K}\right)-{(\mu }_a+\psi )Q_w,\\
f_3=&\frac{\psi }{2}Q_{\bar{w}}+\sigma F_w-{\mu }_{\bar{w}}F_{\bar{w}},\\
f_4=&\frac{\psi }{2}Q_w-\sigma F_w-({\mu }_w-u(t))F_w
\end{align*}
By Pontryagin's minimum principle\cite{pontryagin2018mathematical}, the necessary conditions for optimality are
\begin{align}\label{ap4_3}
\frac{d\lambda_{Q_{\bar{w}}}}{dt} =&-\frac{\partial H}{\partial Q_{\bar{w}}},\\\label{ap4_4}
\frac{d\lambda_{Q_{w}}}{dt} =&-\frac{\partial H}{\partial Q_{w}},\\\label{ap4_5}
\frac{d\lambda_{F_{\bar{w}}}}{dt} =&-\frac{\partial H}{\partial F_{\bar{w}}},\\\label{ap4_6}
\frac{d\lambda_{F_{w}}}{dt} =&-\frac{\partial H}{\partial F_{w}},\\\label{ap4_7}
g(u) = &\frac{\partial H}{\partial u},
\end{align}
with the transversality conditions,
$\lambda_{Q_{\bar{w}}}(t_f)=0$, $\lambda_{Q_{w}}(t_f)=0$, $\lambda_{F_{\bar{w}}}(t_f)=20F_{\bar{w}}(t_f)$ and  $\lambda_{F_{w}}(t_f)=0$. With this formulation and
\begin{equation}\label{ap4_8}
u_{i+1}(t)= min\left(\delta_{max},max\left(u_{i+1}(t),\mu_w - \frac{\phi_w\mu_{\bar{w}}(1-\rho_2)}{\phi_{\bar{w}}}\right)\right),
\end{equation}
at each control evaluation step, the conjugate gradient algorithm \cite{edge1976function, lasdon1967conjugate} is adopted to solve the problem. Similar approach can be adopted for $\rho_1 \in (0, 1] $ and $\sigma > 0$.

\bibliographystyle{unsrtnat}
\bibliography{reference}

\begin{thebibliography}{53}
\providecommand{\natexlab}[1]{#1}
\providecommand{\url}[1]{\texttt{#1}}
\expandafter\ifx\csname urlstyle\endcsname\relax
  \providecommand{\doi}[1]{doi: #1}\else
  \providecommand{\doi}{doi: \begingroup \urlstyle{rm}\Url}\fi

\bibitem[Bhatt et~al.(2013)Bhatt, Gething, Brady, Messina, Farlow, Moyes,
  Drake, Brownstein, Hoen, Sankoh, et~al.]{bhatt2013global}
Samir Bhatt, Peter~W Gething, Oliver~J Brady, Jane~P Messina, Andrew~W Farlow,
  Catherine~L Moyes, John~M Drake, John~S Brownstein, Anne~G Hoen, Osman
  Sankoh, et~al.
\newblock The global distribution and burden of dengue.
\newblock \emph{Nature}, 496\penalty0 (7446):\penalty0 504, 2013.

\bibitem[Kyle and Harris(2008)]{kyle2008global}
Jennifer~L Kyle and Eva Harris.
\newblock Global spread and persistence of dengue.
\newblock \emph{Annu. Rev. Microbiol.}, 62:\penalty0 71--92, 2008.

\bibitem[Organization et~al.(2012)]{world2012global}
World~Health Organization et~al.
\newblock Global strategy for dengue prevention and control 2012-2020.
\newblock 2012.

\bibitem[Manrique-Saide et~al.(2015)Manrique-Saide, Che-Mendoza, Barrera-Perez,
  Guillermo-May, Herrera-Bojorquez, Dzul-Manzanilla, Gutierrez-Castro, Lenhart,
  Vazquez-Prokopec, Sommerfeld, et~al.]{manrique2015use}
Pablo Manrique-Saide, Azael Che-Mendoza, Mario Barrera-Perez, Guillermo
  Guillermo-May, Josue Herrera-Bojorquez, Felipe Dzul-Manzanilla, Cipriano
  Gutierrez-Castro, Audrey Lenhart, Gonzalo Vazquez-Prokopec, Johannes
  Sommerfeld, et~al.
\newblock Use of insecticide-treated house screens to reduce infestations of
  dengue virus vectors, mexico.
\newblock \emph{Emerging infectious diseases}, 21\penalty0 (2):\penalty0 308,
  2015.

\bibitem[Ooi et~al.(2006)Ooi, Goh, and Gubler]{ooi2006dengue}
Eng-Eong Ooi, Kee-Tai Goh, and Duane~J Gubler.
\newblock Dengue prevention and 35 years of vector control in singapore.
\newblock \emph{Emerging infectious diseases}, 12\penalty0 (6):\penalty0 887,
  2006.

\bibitem[Hoffmann et~al.(2011)Hoffmann, Montgomery, Popovici, Iturbe-Ormaetxe,
  Johnson, Muzzi, Greenfield, Durkan, Leong, Dong,
  et~al.]{hoffmann2011successful}
Ary~A Hoffmann, BL~Montgomery, Jean Popovici, I~Iturbe-Ormaetxe, PH~Johnson,
  F~Muzzi, M~Greenfield, M~Durkan, YS~Leong, Y~Dong, et~al.
\newblock Successful establishment of wolbachia in aedes populations to
  suppress dengue transmission.
\newblock \emph{Nature}, 476\penalty0 (7361):\penalty0 454, 2011.

\bibitem[Glaser and Meola(2010)]{glaser2010native}
Robert~L Glaser and Mark~A Meola.
\newblock The native wolbachia endosymbionts of drosophila melanogaster and
  culex quinquefasciatus increase host resistance to west nile virus infection.
\newblock \emph{PloS one}, 5\penalty0 (8):\penalty0 e11977, 2010.

\bibitem[Dutra et~al.(2016)Dutra, Rocha, Dias, Mansur, Caragata, and
  Moreira]{dutra2016wolbachia}
Heverton Leandro~Carneiro Dutra, Marcele~Neves Rocha, Fernando Braga~Stehling
  Dias, Simone~Brutman Mansur, Eric~Pearce Caragata, and Luciano~Andrade
  Moreira.
\newblock Wolbachia blocks currently circulating zika virus isolates in
  brazilian aedes aegypti mosquitoes.
\newblock \emph{Cell host \& microbe}, 19\penalty0 (6):\penalty0 771--774,
  2016.

\bibitem[Gomes and Barillas-Mury(2018)]{gomes2018infection}
Fabio~M Gomes and Carolina Barillas-Mury.
\newblock Infection of anopheline mosquitoes with wolbachia: Implications for
  malaria control.
\newblock \emph{PLoS pathogens}, 14\penalty0 (11):\penalty0 e1007333, 2018.

\bibitem[Moreira et~al.(2009)Moreira, Iturbe-Ormaetxe, Jeffery, Lu, Pyke,
  Hedges, Rocha, Hall-Mendelin, Day, Riegler, et~al.]{moreira2009wolbachia}
Luciano~A Moreira, I{\~n}aki Iturbe-Ormaetxe, Jason~A Jeffery, Guangjin Lu,
  Alyssa~T Pyke, Lauren~M Hedges, Bruno~C Rocha, Sonja Hall-Mendelin, Andrew
  Day, Markus Riegler, et~al.
\newblock A wolbachia symbiont in aedes aegypti limits infection with dengue,
  chikungunya, and plasmodium.
\newblock \emph{Cell}, 139\penalty0 (7):\penalty0 1268--1278, 2009.

\bibitem[Turley et~al.(2009)Turley, Moreira, O'Neill, and
  McGraw]{turley2009wolbachia}
Andrew~P Turley, Luciano~A Moreira, Scott~L O'Neill, and Elizabeth~A McGraw.
\newblock Wolbachia infection reduces blood-feeding success in the dengue fever
  mosquito, aedes aegypti.
\newblock \emph{PLoS Neglected Tropical Diseases}, 3\penalty0 (9):\penalty0
  e516, 2009.

\bibitem[Turelli and Hoffmann(1995)]{turelli1995cytoplasmic}
Michael Turelli and Ary~A Hoffmann.
\newblock Cytoplasmic incompatibility in drosophila simulans: dynamics and
  parameter estimates from natural populations.
\newblock \emph{Genetics}, 140\penalty0 (4):\penalty0 1319--1338, 1995.

\bibitem[Ant et~al.(2018)Ant, Herd, Geoghegan, Hoffmann, and
  Sinkins]{ant2018wolbachia}
Thomas~H Ant, Christie~S Herd, Vincent Geoghegan, Ary~A Hoffmann, and Steven~P
  Sinkins.
\newblock The wolbachia strain wau provides highly efficient virus transmission
  blocking in aedes aegypti.
\newblock \emph{PLoS pathogens}, 14\penalty0 (1):\penalty0 e1006815, 2018.

\bibitem[Walker et~al.(2011)Walker, Johnson, Moreira, Iturbe-Ormaetxe, Frentiu,
  McMeniman, Leong, Dong, Axford, Kriesner, et~al.]{walker2011wmel}
TJPH Walker, PH~Johnson, LA~Moreira, Inaki Iturbe-Ormaetxe, FD~Frentiu,
  CJ~McMeniman, YS~Leong, Y~Dong, Jason Axford, P~Kriesner, et~al.
\newblock The wmel wolbachia strain blocks dengue and invades caged aedes
  aegypti populations.
\newblock \emph{Nature}, 476\penalty0 (7361):\penalty0 450, 2011.

\bibitem[Fine(1978)]{fine1978dynamics}
Paul~EM Fine.
\newblock On the dynamics of symbiote-dependent cytoplasmic incompatibility in
  culicine mosquitoes.
\newblock \emph{Journal of invertebrate pathology}, 31\penalty0 (1):\penalty0
  10--18, 1978.

\bibitem[Ross et~al.(2019)Ross, Ritchie, Axford, and Hoffmann]{ross2019loss}
Perran~A Ross, Scott~A Ritchie, Jason~K Axford, and Ary~A Hoffmann.
\newblock Loss of cytoplasmic incompatibility in wolbachia-infected aedes
  aegypti under field conditions.
\newblock \emph{PLoS neglected tropical diseases}, 13\penalty0 (4):\penalty0
  e0007357, 2019.

\bibitem[Turelli(2010)]{turelli2010cytoplasmic}
Michael Turelli.
\newblock Cytoplasmic incompatibility in populations with overlapping
  generations.
\newblock \emph{Evolution: International Journal of Organic Evolution},
  64\penalty0 (1):\penalty0 232--241, 2010.

\bibitem[Yeap et~al.(2011)Yeap, Mee, Walker, Weeks, O'Neill, Johnson, Ritchie,
  Richardson, Doig, Endersby, et~al.]{yeap2011dynamics}
Heng~Lin Yeap, Peter Mee, Thomas Walker, Andrew~R Weeks, Scott~L O'Neill,
  Petrina Johnson, Scott~A Ritchie, Kelly~M Richardson, Clare Doig, Nancy~M
  Endersby, et~al.
\newblock Dynamics of the “popcorn” wolbachia infection in outbred aedes
  aegypti informs prospects for mosquito vector control.
\newblock \emph{Genetics}, 187\penalty0 (2):\penalty0 583--595, 2011.

\bibitem[Martcheva(2015)]{martcheva2015introduction}
Maia Martcheva.
\newblock \emph{An introduction to mathematical epidemiology}, volume~61.
\newblock Springer, 2015.

\bibitem[Ndii et~al.(2012)Ndii, Hickson, and Mercer]{ndii2012modelling}
Meksianis~Z Ndii, Roslyn~I Hickson, and Geoffry~N Mercer.
\newblock Modelling the introduction of wolbachia into aedes aegypti mosquitoes
  to reduce dengue transmission.
\newblock \emph{The ANZIAM Journal}, 53\penalty0 (3):\penalty0 213--227, 2012.

\bibitem[Xue et~al.(2017)Xue, Manore, Thongsripong, and Hyman]{xue2017two}
Ling Xue, Carrie~A Manore, Panpim Thongsripong, and James~M Hyman.
\newblock Two-sex mosquito model for the persistence of wolbachia.
\newblock \emph{Journal of biological dynamics}, 11\penalty0 (sup1):\penalty0
  216--237, 2017.

\bibitem[Crain et~al.(2011)Crain, Mains, Suh, Huang, Crowley, and
  Dobson]{crain2011wolbachia}
Philip~R Crain, James~W Mains, Eunho Suh, Yunxin Huang, Philip~H Crowley, and
  Stephen~L Dobson.
\newblock Wolbachia infections that reduce immature insect survival: predicted
  impacts on population replacement.
\newblock \emph{BMC evolutionary biology}, 11\penalty0 (1):\penalty0 290, 2011.

\bibitem[Schraiber et~al.(2012)Schraiber, Kaczmarczyk, Kwok, Park, Silverstein,
  Rutaganira, Aggarwal, Schwemmer, Hom, Grosberg,
  et~al.]{schraiber2012constraints}
Joshua~G Schraiber, Angela~N Kaczmarczyk, Ricky Kwok, Miran Park, Rachel
  Silverstein, Florentine~U Rutaganira, Taruna Aggarwal, Michael~A Schwemmer,
  Carole~L Hom, Richard~K Grosberg, et~al.
\newblock Constraints on the use of lifespan-shortening wolbachia to control
  dengue fever.
\newblock \emph{Journal of theoretical biology}, 297:\penalty0 26--32, 2012.

\bibitem[Zheng et~al.(2018)Zheng, Tang, Yu, and Qiu]{zheng2018wolbachia}
Bo~Zheng, Moxun Tang, Jianshe Yu, and Junxiong Qiu.
\newblock Wolbachia spreading dynamics in mosquitoes with imperfect maternal
  transmission.
\newblock \emph{Journal of mathematical biology}, 76\penalty0 (1-2):\penalty0
  235--263, 2018.

\bibitem[Campo-Duarte et~al.(2018)Campo-Duarte, Vasilieva, Cardona-Salgado, and
  Svinin]{campo2018optimal}
Doris~E Campo-Duarte, Olga Vasilieva, Daiver Cardona-Salgado, and Mikhail
  Svinin.
\newblock Optimal control approach for establishing wmelpop wolbachia infection
  among wild aedes aegypti populations.
\newblock \emph{Journal of mathematical biology}, 76\penalty0 (7):\penalty0
  1907--1950, 2018.

\bibitem[Rafikov et~al.(2019)Rafikov, Meza, Correa, and Wyse]{Rafikov2019}
Marat Rafikov, Magno Enrique~Mendoza Meza, Diego Paolo~Ferruzzo Correa, and
  Ana~Paula Wyse.
\newblock Controlling aedes aegypti populations by limited wolbachia -based
  strategies in a seasonal environment.
\newblock \emph{Mathematical Methods in the Applied Sciences}, mar 2019.
\newblock \doi{10.1002/mma.5527}.
\newblock URL \url{https://doi.org/10.1002/mma.5527}.

\bibitem[Caspari and Watson(1959)]{caspari1959evolutionary}
Ernst Caspari and GS~Watson.
\newblock On the evolutionary importance of cytoplasmic sterility in
  mosquitoes.
\newblock \emph{Evolution}, 13\penalty0 (4):\penalty0 568--570, 1959.

\bibitem[Qu et~al.(2018)Qu, Xue, and Hyman]{Qu2018}
Zhuolin Qu, Ling Xue, and James~M. Hyman.
\newblock Modeling the transmission of wolbachia in mosquitoes for controlling
  mosquito-borne diseases.
\newblock \emph{{SIAM} Journal on Applied Mathematics}, 78\penalty0
  (2):\penalty0 826--852, jan 2018.
\newblock \doi{10.1137/17m1130800}.
\newblock URL \url{https://doi.org/10.1137/17m1130800}.

\bibitem[Li and Liu(2017)]{Li2017}
Yazhi Li and Xianning Liu.
\newblock An impulsive model for wolbachia infection control of mosquito-borne
  diseases with general birth and death rate functions.
\newblock \emph{Nonlinear Analysis: Real World Applications}, 37:\penalty0
  412--432, oct 2017.
\newblock \doi{10.1016/j.nonrwa.2017.03.003}.
\newblock URL \url{https://doi.org/10.1016/j.nonrwa.2017.03.003}.

\bibitem[Bian et~al.(2013)Bian, Joshi, Dong, Lu, Zhou, Pan, Xu, Dimopoulos, and
  Xi]{Bian2013}
G.~Bian, D.~Joshi, Y.~Dong, P.~Lu, G.~Zhou, X.~Pan, Y.~Xu, G.~Dimopoulos, and
  Z.~Xi.
\newblock Wolbachia invades anopheles stephensi populations and induces
  refractoriness to plasmodium infection.
\newblock \emph{Science}, 340\penalty0 (6133):\penalty0 748--751, may 2013.
\newblock \doi{10.1126/science.1236192}.
\newblock URL \url{https://doi.org/10.1126/science.1236192}.

\bibitem[ARRIVILLAGA(2004)]{arrivillaga2004}
J.~ARRIVILLAGA.
\newblock Food as a limiting factor for aedes aegypti in water-storage
  containers.
\newblock \emph{Journal of Vector Ecology}, 29:\penalty0 11--20, 2004.
\newblock URL \url{https://ci.nii.ac.jp/naid/10018014163/en/}.

\bibitem[Hoffmann et~al.(2014)Hoffmann, Iturbe-Ormaetxe, Callahan, Phillips,
  Billington, Axford, Montgomery, Turley, and
  O{\textquotesingle}Neill]{Hoffmann2014}
Ary~A. Hoffmann, Inaki Iturbe-Ormaetxe, Ashley~G. Callahan, Ben~L. Phillips,
  Katrina Billington, Jason~K. Axford, Brian Montgomery, Andrew~P. Turley, and
  Scott~L. O{\textquotesingle}Neill.
\newblock Stability of the {wMel} wolbachia infection following invasion into
  aedes aegypti populations.
\newblock \emph{{PLoS} Neglected Tropical Diseases}, 8\penalty0 (9):\penalty0
  e3115, sep 2014.
\newblock \doi{10.1371/journal.pntd.0003115}.
\newblock URL \url{https://doi.org/10.1371/journal.pntd.0003115}.

\bibitem[McMeniman et~al.(2009)McMeniman, Lane, Cass, Fong, Sidhu, Wang, and
  O{\textquotesingle}Neill]{McMeniman2009}
C.~J. McMeniman, R.~V. Lane, B.~N. Cass, A.~W.C. Fong, M.~Sidhu, Y.-F. Wang,
  and S.~L. O{\textquotesingle}Neill.
\newblock Stable introduction of a life-shortening wolbachia infection into the
  mosquito aedes aegypti.
\newblock \emph{Science}, 323\penalty0 (5910):\penalty0 141--144, jan 2009.
\newblock \doi{10.1126/science.1165326}.
\newblock URL \url{https://doi.org/10.1126/science.1165326}.

\bibitem[McMeniman and O{\textquotesingle}Neill(2010)]{McMeniman2010}
Conor~J. McMeniman and Scott~L. O{\textquotesingle}Neill.
\newblock A virulent wolbachia infection decreases the viability of the dengue
  vector aedes aegypti during periods of embryonic quiescence.
\newblock \emph{{PLoS} Neglected Tropical Diseases}, 4\penalty0 (7):\penalty0
  e748, jul 2010.
\newblock \doi{10.1371/journal.pntd.0000748}.
\newblock URL \url{https://doi.org/10.1371/journal.pntd.0000748}.

\bibitem[Lounibos and Escher(2008)]{Lounibos2008}
L.~Philip Lounibos and Richard~L. Escher.
\newblock Sex ratios of mosquitoes from long-term censuses of florida tree
  holes.
\newblock \emph{Journal of the American Mosquito Control Association},
  24\penalty0 (1):\penalty0 11--15, mar 2008.
\newblock \doi{10.2987/5656.1}.
\newblock URL \url{https://doi.org/10.2987/5656.1}.

\bibitem[Styer et~al.(2007)Styer, Minnick, Sun, and Scott]{Styer2007}
Linda~M. Styer, Sharon~L. Minnick, Anna~K. Sun, and Thomas~W. Scott.
\newblock Mortality and reproductive dynamics of aedes aegypti (diptera:
  Culicidae) fed human blood.
\newblock \emph{Vector-Borne and Zoonotic Diseases}, 7\penalty0 (1):\penalty0
  86--98, mar 2007.
\newblock \doi{10.1089/vbz.2007.0216}.
\newblock URL \url{https://doi.org/10.1089/vbz.2007.0216}.

\bibitem[Amer and Mehlhorn(2006)]{amer2006larvicidal}
Abdelkrim Amer and Heinz Mehlhorn.
\newblock Larvicidal effects of various essential oils against aedes,
  anopheles, and culex larvae (diptera, culicidae).
\newblock \emph{Parasitology Research}, 99\penalty0 (4):\penalty0 466--472,
  2006.

\bibitem[Cui et~al.(2008)Cui, Sun, and Zhu]{cui2008impact}
Jingan Cui, Yonghong Sun, and Huaiping Zhu.
\newblock The impact of media on the control of infectious diseases.
\newblock \emph{Journal of dynamics and differential equations}, 20\penalty0
  (1):\penalty0 31--53, 2008.

\bibitem[Dushoff et~al.(1998)Dushoff, Huang, and
  Castillo-Chavez]{dushoff1998backwards}
Jonathan Dushoff, Wenzhang Huang, and Carlos Castillo-Chavez.
\newblock Backwards bifurcations and catastrophe in simple models of fatal
  diseases.
\newblock \emph{Journal of mathematical biology}, 36\penalty0 (3):\penalty0
  227--248, 1998.

\bibitem[LaSalle(1960)]{lasalle1960some}
Joseph LaSalle.
\newblock Some extensions of liapunov's second method.
\newblock \emph{IRE Transactions on circuit theory}, 7\penalty0 (4):\penalty0
  520--527, 1960.

\bibitem[Krasovskii(1959)]{krasovskii1959nekotorye}
NN~Krasovskii.
\newblock Nekotorye zadachi teorii ustoichivosti dvizheniya (certain problems
  in the theory of stability of motion), moscow: Gosudarstv.
\newblock \emph{Izdat. Fiz.-Mat. Lit}, 1959.

\bibitem[Smith and Waltman(1995)]{smith1995theory}
Hal~L Smith and Paul Waltman.
\newblock \emph{The theory of the chemostat: dynamics of microbial
  competition}, volume~13.
\newblock Cambridge university press, 1995.

\bibitem[Li{\'e}nard and Chipart(1914)]{lienard1914signe}
A~Li{\'e}nard and MH~Chipart.
\newblock Sur le signe de la partie r{\'e}elle des racines d’une {\'e}quation
  alg{\'e}brique.
\newblock \emph{J. Math. Pures Appl}, 10\penalty0 (4):\penalty0 291--346, 1914.

\bibitem[O'Neill et~al.(2018)O'Neill, Ryan, Turley, Wilson, Retzki,
  Iturbe-Ormaetxe, Dong, Kenny, Paton, Ritchie, et~al.]{o2018scaled}
Scott~L O'Neill, Peter~A Ryan, Andrew~P Turley, Geoff Wilson, Kate Retzki,
  Inaki Iturbe-Ormaetxe, Yi~Dong, Nichola Kenny, Christopher~J Paton, Scott~A
  Ritchie, et~al.
\newblock Scaled deployment of wolbachia to protect the community from dengue
  and other aedes transmitted arboviruses.
\newblock \emph{Gates open research}, 2, 2018.

\bibitem[Edge and Powers(1976)]{edge1976function}
Ernest~R Edge and William~F Powers.
\newblock Function-space quasi-newton algorithms for optimal control problems
  with bounded controls and singular arcs.
\newblock \emph{Journal of Optimization Theory and Applications}, 20\penalty0
  (4):\penalty0 455--479, 1976.

\bibitem[Bryson and Ho(1975)]{bryson1975applied}
Arthur~E Bryson and Yu-Chi Ho.
\newblock Applied optimal control, revised printing.
\newblock \emph{Hemisphere, New York}, 1975.

\bibitem[Ferguson et~al.(2015)Ferguson, Kien, Clapham, Aguas, Trung, Chau,
  Popovici, Ryan, O’Neill, McGraw, et~al.]{ferguson2015modeling}
Neil~M Ferguson, Duong Thi~Hue Kien, Hannah Clapham, Ricardo Aguas, Vu~Tuan
  Trung, Tran Nguyen~Bich Chau, Jean Popovici, Peter~A Ryan, Scott~L O’Neill,
  Elizabeth~A McGraw, et~al.
\newblock Modeling the impact on virus transmission of wolbachia-mediated
  blocking of dengue virus infection of aedes aegypti.
\newblock \emph{Science translational medicine}, 7\penalty0 (279):\penalty0
  279ra37--279ra37, 2015.

\bibitem[Ndii et~al.(2016)Ndii, Allingham, Hickson, and Glass]{ndii2016effect}
Meksianis~Z Ndii, David Allingham, RI~Hickson, and Kathryn Glass.
\newblock The effect of wolbachia on dengue outbreaks when dengue is repeatedly
  introduced.
\newblock \emph{Theoretical Population Biology}, 111:\penalty0 9--15, 2016.

\bibitem[Yang et~al.(2009)Yang, Macoris, Galvani, Andrighetti, and
  Wanderley]{yang2009assessing}
HM~Yang, M~d L d~G Macoris, KC~Galvani, MTM Andrighetti, and DMV Wanderley.
\newblock Assessing the effects of temperature on the population of aedes
  aegypti, the vector of dengue.
\newblock \emph{Epidemiology \& Infection}, 137\penalty0 (8):\penalty0
  1188--1202, 2009.

\bibitem[Xue et~al.(2018)Xue, Fang, and Hyman]{xue2018comparing}
Ling Xue, Xin Fang, and James~M Hyman.
\newblock Comparing the effectiveness of different strains of wolbachia for
  controlling chikungunya, dengue fever, and zika.
\newblock \emph{PLOS Neglected Tropical Diseases}, 12\penalty0 (7):\penalty0
  e0006666, 2018.

\bibitem[Patterson and Rao(2014)]{patterson2014gpops}
Michael~A Patterson and Anil~V Rao.
\newblock Gpops-ii: A matlab software for solving multiple-phase optimal
  control problems using hp-adaptive gaussian quadrature collocation methods
  and sparse nonlinear programming.
\newblock \emph{ACM Transactions on Mathematical Software (TOMS)}, 41\penalty0
  (1):\penalty0 1, 2014.

\bibitem[Pontryagin(2018)]{pontryagin2018mathematical}
Lev~Semenovich Pontryagin.
\newblock \emph{Mathematical theory of optimal processes}.
\newblock Routledge, 2018.

\bibitem[Lasdon et~al.(1967)Lasdon, Mitter, and Waren]{lasdon1967conjugate}
L~Lasdon, S~Mitter, and A~Waren.
\newblock The conjugate gradient method for optimal control problems.
\newblock \emph{IEEE Transactions on Automatic Control}, 12\penalty0
  (2):\penalty0 132--138, 1967.

\end{thebibliography}

%\end{linenumbers}
\end{document}